\renewcommand{\orcidID}[1]{\smash{\href{http://orcid.org/#1}{\protect\raisebox{-1.25pt}{\protect\includegraphics{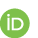}}}}}
\begin{document}

\title{Guessing the buffer bound for k-synchronizability}

\author{Cinzia {Di Giusto}
       \inst{1}
       \orcidID{0000-0003-1563-6581}
\and   Laetitia Laversa
       \inst{1}
       \orcidID{0000-0003-3775-6496}
\and   Etienne Lozes
       \inst{1}
       \orcidID{0000-0001-8505-585X}
}

\institute{Université Côte d’Azur, CNRS, I3S, Sophia Antipolis, France}
\authorrunning{C. Di Giusto et al.}



%
\maketitle             
\begin{abstract}
A communicating system is \kSable{k} if all of the message sequence charts representing the executions can be divided into slices of  $k$ sends followed by $k$ receptions.
It was previously shown that, for a fixed given $k$, one could decide whether a  communicating system is \kSable{k}. This result
is interesting because the reachability pro\-blem can be solved for \kSable{k} systems. However, the decision procedure assumes that the bound
$k$ is fixed.
In this paper we improve this result and show that it is possible to decide if such a bound $k$ exists.
\keywords{communicating automata \and MSC \and synchronizability}
\end{abstract}

\section{Introduction}
Communicating finite state machines \cite{DBLP:journals/jacm/BrandZ83} model distributed systems where participants exchange messages via FIFO buffers.
Due to the unboundedness of the buffers, the model is Turing powerful as soon as there are two participants and two queues. In order to recover decidability, several works introduced restrictions on the model, for instance: lossiness of the channels~\cite{AbdullaJ96}, specific topologies, or bounded
context switching~\cite{DBLP:conf/tacas/TorreMP08}. Another line of research focused on analyzing the system under the assumption that the semantics is synchronous
\cite{BasuB16,DBLP:conf/icalp/FinkelL17,Lipton75,ElradF82,ChouG88,Chaouch-SaadCM09,KraglQH18,GleissenthallKB19}
or that buffers are bounded. This assumption is not as res\-trictive as it may seem at first, because several systems enjoy the property that their execution, although not necessarily bounded, can be simulated by a causally equivalent bounded execution.
Existentially $k$-bounded communica\-ting systems~\cite{DBLP:journals/fuin/GenestKM07} are precisely the systems whose message sequence charts
can be generated by $k$-bounded executions. In particular, the reachability problem is decidable for existentially $k$-bounded communicating systems.
A limitation of this framework is that the bound $k$ on the buffer size must be fixed. A natural question is whether the existence of such a bound can be decided.
Genest, Kuske and Muscholl answered this question negatively~\cite{DBLP:journals/fuin/GenestKM07}.
Bouajjani et al.~\cite{DBLP:conf/cav/BouajjaniEJQ18}\footnote{The results in \cite{DBLP:conf/cav/BouajjaniEJQ18} have then been refined in \cite{DBLP:conf/fossacs/GiustoLL20}}
introduced a variant of existentially $k$-bounded communicating systems they called \kSable{k} systems.
A system is \kSable{k} if each of its execution is causally equivalent to a sequence of communication rounds composed of
at most $k$ sends followed by at most $k$ receptions. In particular, each execution of a \kSable{k} system is causally equivalent to a $k$-bounded
execution (provided all messages are eventually received). Like for existentially bounded systems, the reachability problem becomes decidable
for \kSable{k} systems, and the membership problem - whether a given system is \kSable{k} for a fixed given $k$ is decidable as well.
Bouajjani et al. conjectured that  the existence of a bound $k$ on the size of the
communication rounds was undecidable.

Instead, in this paper, we show  that this problem is decidable.
This result contrasts with the negative result about the same question for existentially bounded communicating systems.
There is an important difference between existentially bounded
 and \kSable{k} ones that explains this situation.
Existentially bounded  systems deal with peer-to-peer communications, with one buffer per pair of machines,
whereas \kSable{k} systems deal with mailbox communications where one buffer per machine merges all incoming
messages.

The paper is organized as follows:
in the next section, we introduce
preli\-mi\-nary definitions on communicating automata
and \kSable{k} systems.
In Section~\ref{sec:lpfe} we explain the general strategy for computing
the bound $k$, which is to compute the automata of two regular
languages: the language of reachable exchanges, and the language of
prime exchanges. In Section~\ref{sec:rfe} we focus on rea\-chable exchanges,
and in Section~\ref{sec:prime} on prime exchanges.
Section~\ref{sec:computation-k0} lastly computes the bound $k$.
Finally Section \ref{sec:conc} concludes with some final remarks.
An appendix with additional material and proofs is added for the reviewer convenience.

\section{Preliminaries}\label{sec:prelim}

Let $\paylodSet$ be a finite set of messages and $\procSet$ a finite set of processes exchanging messages.
A send action, denoted $\send{p}{q}{\amessage}$, designates the sending of message $\amessage$ from process $p$ to process $q$, storing it in the queue of $q$. Similarly, a receive action $\rec{p}{q}{\amessage}$ expresses that process $q$  pops message $\amessage$ from its queue of incoming messages.  We write $a$ to denote a send or receive action.
Let
$\sendSet~=~\{ \send{p}{q}{\amessage} \mid  p,q \in \procSet, \amessage \in \paylodSet \}$ be
the set of send actions and $\receiveSet~=~\{ \rec{p}{q}{\amessage}\mid  q \in \procSet, \amessage \in \paylodSet\}$ the set of receive actions. $\sendSet_p$ and $\receiveSet_p$ stand for the set of sends and receives  of process $p$ respectively.

A \emph{system} is a tuple $ \system = \left( (L_p, \delta_p, l^0_p) \mid p \in \procSet \right) $ where, for each process $p$, $L_p$ is a finite set of local
control states, $\delta_p \subseteq (L_p \times ( \sendSet_p \cup \receiveSet_p ) \times L_p)$ is the transition relation
and $l^0_p$ is the initial state.
In the rest of the paper, when talking about a system $\system$, we may  also identify it with the global automaton obtained as  the product of the process automata  and denoted  $(L_\system, \delta_\system, \globalstate{l_0})$ where
$L_\system = \Pi_{p \in \procSet} L_p$ is the set of global control states,
    $\globalstate{l_0} = (l^0_p)_{p\in \procSet}$ is the initial global control
state and
$((l_1,\cdots, l_q, \cdots, l_n), a, (l_1,\cdots, l'_q,
\cdots, l_n)) \in \delta_\system$ iff $ (l_q, a, l'_q) \in \delta_q$ for $ q \in \procSet$.
We write $\globalstate{l}$ in bold
to denote the tuple of control states $(l_p)_{p\in\procSet}$, and we sometimes
write $l_q \xrightarrow{a}_q l'_q$ (resp. $\globalstate{l}\tr{a}\globalstate{l}')$ for $(l_q,a,l_q')\in\delta_q$ (resp. $(\globalstate{l},a,\globalstate{l}')\in\delta_{\system}$). We write $\Tr{a_1 \cdots a_n}$ for $\tr{a_1} \cdots \tr{a_n}$.

A \emph{configuration} is a pair $(\globalstate{l}, \B)$
where $\globalstate{l}=(l_p)_{p\in\procSet} \in  L_\system$ 
  is a global control state of $\system$,
and $\B= (b_p)_{ p \in \procSet} \in (\paylodSet^*)^\procSet$ is a vector of
buffers, each $b_p$ being a word over $\paylodSet$.
$\B_0$ stands for the vector
of empty  buffers.
The mailbox semantics of a system is defined by the two rules below.

\noindent \begin{minipage}{.5\columnwidth}
\begin{center}
\small{[SEND]}

\AxiomC{{\small $
\globalstate{l} \tr{\send{p}{q}{\amessage}} \globalstate{l'}
    \quad
    b_{q}' = b_{q} \cdot \amessage$}}
\UnaryInfC{{\small
$(\globalstate{l}, \B)
               \trB{\send{p}{q}{\amessage}}
	       (\globalstate{l}',\B\sub{b_{q}'}{b_{q}})
               $
}}
\DisplayProof
\end{center}
\end{minipage}
\begin{minipage}{.5\columnwidth}
\begin{center}
\small{[RECEIVE]}

\AxiomC{{\small$ \globalstate{l}
               \tr{\rec{p}{q}{\amessage}}
               \globalstate{l}'
    \quad
    b_{q} = \amessage \cdot b_{q}'$}}
\UnaryInfC{{\small$(\globalstate{l}, \B)
               \trB{\rec{p}{q}{\amessage}}
	       (\globalstate{l}',\B\sub{b_{q}'}{b_{q}})
               $}}
\DisplayProof
\end{center}
\end{minipage}

\vspace*{0.1cm}
\noindent In this paper, we focus on \textbf{mailbox} semantics.
An  execution $e=a_1\cdots a_n$ is a sequence of actions in
$\sendSet \cup \receiveSet$ such that $(\globalstate{l_0}, \B_0) \trB{a_1} \cdots \trB{a_n} (\globalstate{l}, \B)$ for some $\globalstate{l}$ and $\B$.
As usual, $\TrB{e}$
stands for $ \trB{a_1} \cdots \trB{a_n}$.
We write $\asEx(\system)$ to denote the set of executions of
a system $\system$.
Executions impose a total order over the actions. To stress the causal dependencies between messages we use message sequence charts (MSCs) that only impose an order between matched pairs of actions and between the actions of a same
process.

\begin{definition}[Message Sequence Chart]
A message sequence chart $\msc$  is a tuple $(Ev,\lambda, \prec_{po}, \prec_{src})$ such that
\begin{enumerate}
\item 
 $Ev$ is a finite set of events partially ordered under $(\prec_{po}\cup\prec_{src})^{*}$,
\item 
 $\lambda:Ev\to S\cup R$ tags each event with an action,
\item 
 for each process $p$, $\prec_{po}$ induces a total order on the events of $p$, i.e. on $\lambda^{-1}(S_p\cup R_p)$,
\item 
 $(Ev,\prec_{src})$ is the graph of a bijection between a subset of $\lambda^{-1}(S)$ and the whole of $\lambda^{-1}(R)$
\item 
 for all $s\prec_{src} r$, there are $p,q,\amessage$ such that
$\lambda(s)=\send{p}{q}{\amessage}$ and $\lambda(r)=\receive{q}{\amessage}$.
\end{enumerate}
\end{definition}

\begin{definition}[Concatenation of MSCs]
Let $\msc_1=(Ev_1,\lambda_1,\prec_{po}^1,\prec_{scr}^1)$ and
$\msc_2=(Ev_2,\lambda_2,\prec_{po}^2,\prec_{scr}^2)$ be two MSCs. Their
concatenation $\msc_1\cdot\msc_2$ is the MSC $\msc=(Ev,\lambda,\prec_{po},\prec_{src})$ such that:
\begin{itemize}
\item $Ev=Ev_1\cup Ev_2$
\item $\lambda=\lambda_1\cup\lambda_2$
\item $\prec_{po}=\prec_{po}^1\cup\prec_{po}^2\cup\bigcup_{p\in\procSet}\{(e_1,e_2)\mid e_1\in \lambda_1^{-1}(S_p\cup R_p),e_2\in \lambda_2^{-1}(S_p\cup R_p)\}$
\item $\prec_{src}=\prec_{src}^1\cup\prec_{src}^2$.
\end{itemize}
\end{definition}
In a sequence of actions $e=a_1\cdots a_n$,
a send action $a_i= \send{p}{q}{\amessage}$ is  \emph{matched}
by a reception $a_j=\rec{p'}{q'}{\amessage'}$  (denoted by $a_i \matches a_j$)
if $i< j$, $p=p'$, $q=q'$, $\amessage=\amessage'$, and there is $\ell\geq 1$ such that
$a_i$ and $a_j$ are the $\ell$th actions of $e$ with these properties respectively.
A send action $a_i$ is \emph{unmatched} if there is no
matching reception in $e$.

The MSC associated with the execution $e=a_1\cdots a_n$ is $(Ev,\lambda,\prec_{po},\prec{src})$
where $Ev=\{1,\cdots,n\}$, $\lambda(i)=a_i$, $i\prec_{po} j$ iff $i<j$ and $\{a_i,a_j\}\subseteq S_p\cup R_p$ for some $p$, and $i\prec_{src} j$ if $a_i\matches a_j$.
%

When $\amessage$ is either an unmatched
$\send{p}{q}{\amessage}$ or a pair of matched actions
$\{\send{p}{q}{\amessage},\rec{p}{q}{\amessage}\}$, we write
$\sender{\amessage}$ for $p$ and $\receiver{\amessage}$ for $q$.
Note that $\procofactionv{R}{\amessage}$ is defined even if $\amessage$ is unmatched.
An MSC is depicted with vertical
timelines (one for each process) where time goes from top to bottom. Points on the lines represent events of this
process. We draw an arc between two matched events and a dashed arc to depict an
unmatched send. The concatenation $\msc_1\cdot \msc_2$ of two MSCs is the union of the two MSCs where,
for each $p$, all $p$-events of $\msc_1$ are considered $\prec_{po}$ smaller than all $p$-events
of $\msc_2$. 
  We write $msc(e)$ for the MSC associated with the execution $e$, 
  and we say that a
  sequence of actions $e$ is a linearization of a given MSC if it is the sequence of actions induced by a total
  order extending $(\prec_{po}\cup\prec_{src})^*$.
We write $\asTr(\system)$ for the set $\{msc(e)\mid e\in\asEx(\system)\}$.
We write $\globalstate{l}\trMSC{\mu}\globalstate{l}'$ to denote
that $\globalstate{l}\Tr{e}\globalstate{l'}$
for any linearization $e$ of $\mu$.
Finally, we recall from \cite{DBLP:conf/fossacs/GiustoLL20} the definition of causal delivery that allows to consider only
 MSCs that correspond to executions in the mailbox semantics.
\begin{definition}[Causal delivery]\label{def:causal-delivery}
Let $\msc=(Ev , \lambda, \prec_{po},\prec_{src})$ be an MSC.
We say that $\msc$ satisfies causal delivery if
it admits a linearization with the total order $<$ such that
for any two  events $s_1, s_2\in Ev$, if $s_1<s_2$,
$\lambda(s_1)~=~\send{p}{q}{\amessage}$ and
$\lambda(s_2) = \send{p'}{q}{\amessage'}$ for a same
destination process $q$,  then
either $s_2$ is unmatched, or there are $r_1,r_2$
such that $s_1\prec_{src}r_1$, $s_2\prec_{src}r_2$, and
$r_1< r_2$.
\end{definition}

  A \kE{k} (with $k\geq1$) is an MSC that admits a linearization $e\in S^{\leq k}R^{\leq k}$ starting with
  at most $k$ sends and followed by at most $k$ receives. An MSC is \emph{\kSous{k}} if it can be chopped into a sequence of \emph{\kE{k}s}.

%
%
\begin{definition}[\kSous{k}]
  An MSC $\msc$ is \kSous{k} if $\msc=\msc_1\cdot \msc_2\cdots \msc_n$ where, for all $i\in\interval{1}{n}$,
$\msc_i$ is a $\kE{k}$.
\end{definition}

\vspace*{0.25cm}
\noindent
\begin{minipage}[c]{9.5cm}


For instance, the MSC $\mu_1$ depicted on Fig.~\ref{fig:msc_2-synchro} is \kSous{2}, as it can be split in two \kE{2}s.

\hspace*{.5cm}An execution $e$ is \kSable{k} if $msc(e)$ is \kSous{k}. A system $\system$ is \emph{\kSable{k}} if
all its executions are \kSable{k}.


\end{minipage}
\begin{minipage}[c]{3cm}
  \begin{center}
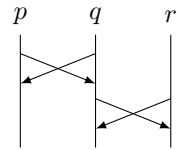

  \begin{tikzpicture}
  \coordinate (pa) at (0,-0.25) ;
  \coordinate (pb) at (0,-1.75) ;
  \coordinate (qa) at (1,-0.25) ;
  \coordinate (qb) at (1,-1.75) ;
  \coordinate (ra) at (2, -0.25);
  \coordinate (rb) at (2, -1.75);
  \draw (0,0) node{$p$} ;
  \draw (1,0) node{$q$} ;
  \draw (2,0) node{$r$} ;
  \draw (pa) -- (pb) ;
  \draw (qa) -- (qb) ;
  \draw (ra) -- (rb) ;
  \draw[>=latex,->] (0,-0.5) -- (1,-0.9);
  \draw[>=latex,->] (1,-0.5) -- (0,-0.9);
  \draw[>=latex,->] (1,-1.1) -- (2,-1.5);
  \draw[>=latex,->] (2,-1.1) -- (1,-1.5);
  \end{tikzpicture}
  \vspace*{-0.3cm}\captionof{figure}{MSC $\mu_1$}
  \label{fig:msc_2-synchro}
\end{center}
\end{minipage}

\begin{theorem}[\cite{DBLP:conf/cav/BouajjaniEJQ18,DBLP:conf/fossacs/GiustoLL20}]
\label{thm:ksync-for-fixed-k}
  It is decidable whether a system $\system$ is \kSable{k} for a  given $k$.
Moreover, it is decidable to know whether a control state is reachable
under the assumption that $\system$ is \kSable{k}.
\end{theorem}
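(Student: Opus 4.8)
The plan is to treat the two claims separately, relying on a combinatorial characterization of $\kSous{k}$ MSCs in terms of a \emph{conflict graph}. For an MSC $\msc$, I would take the conflict graph to have one node per message (a matched pair or an unmatched send) and an edge $m \to m'$ whenever some action of $m$ is $\prec_{po}$- or $\prec_{src}$-before an action of $m'$, recording whether the dependency is of send/send, send/receive, receive/send or receive/receive type. The first step would be to prove that $\msc$ is $\kSous{k}$ iff (i) the conflict graph contains no \emph{bad cycle}, i.e.\ no cycle whose orientation forbids cutting its messages into a prefix of sends followed by a suffix of receives, and (ii) every strongly connected component of the conflict graph involves at most $k$ messages. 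Condition (i) is exactly sliceability into exchanges, since a bad cycle forces two messages into the same slice while preventing that slice from being a valid exchange; condition (ii) bounds each slice by $k$. This reduces both claims to reasoning about conflict graphs of reachable MSCs of $\system$.

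For the reachability claim, I would observe that if $\system$ is assumed $\kSable{k}$, then every reachable global control state $\globalstate{l}$ is reached by an execution whose MSC is $\kSous{k}$, hence by a linearization proceeding in rounds of at most $k$ sends followed by at most $k$ receives (a $k$-synchronous linearization). Along such a linearization every buffer of $\B$ holds at most $k$ pending messages at any point, so the set of configurations $(\globalstate{l},\B)$ that can be visited is finite and effectively computable. I would therefore build a finite ``$k$-synchronous product automaton'' whose states are pairs $(\globalstate{l},\B)$ with each buffer bounded by $k$, and whose transitions implement the [SEND] and [RECEIVE] rules constrained to the round discipline, and then decide reachability of a control state by ordinary finite-state reachability. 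Correctness rests on the fact that, under the $\kSable{k}$ assumption, restricting to $k$-synchronous linearizations loses no reachable control state — a consequence of the characterization above together with causal delivery.

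The membership claim is the delicate one. Here I would search for a non-$\kSous{k}$ reachable MSC by looking for a \emph{borderline} witness: an execution that is $k$-synchronous up to its penultimate round and whose conflict graph acquires either a bad cycle or a strongly connected component of size $k+1$ only within its last exchange. The key lemma to establish is that such a localized witness exists whenever any violating execution does: among all counterexamples one picks a shortest one and argues, by commuting causally independent actions, that the violation can be pushed into a single final exchange while the prefix stays $k$-synchronous. Granting this, the search becomes finite: run the bounded $k$-synchronous monitor from the reachability part, and from each reachable $k$-synchronous configuration try to append a bounded block of sends and receives, testing via the conflict-graph criterion whether it creates a bad cycle or an oversized component.

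The main obstacle is exactly this borderline normal-form lemma. One must show that any counterexample to $\kSous{k}$ can be reordered into a causally equivalent one whose offending conflict-graph feature is confined to the last exchange, without destroying either the reachability of the control state or the mailbox causal-delivery constraint. This requires a careful commutation analysis of the conflict graph — proving that bad cycles and oversized components are preserved under the chosen reorderings and that no spurious ones are introduced — and it is here that the specific shape of \textbf{mailbox} semantics, with one buffer per receiver merging all senders, has to be exploited rather than merely tolerated.
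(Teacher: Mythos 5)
This statement is not proved in the paper at all: Theorem~\ref{thm:ksync-for-fixed-k} is recalled from the cited works of Bouajjani et al.\ and Di Giusto et al., so your proposal has to be measured against the arguments given there. Measured that way, you have reconstructed essentially the same route: the conflict-graph characterization of \kSous{k} MSCs (every strongly connected component has at most $k$ vertices and no cycle contains a ``bad'', i.e.\ receive-before-send, edge), the reduction of control-state reachability to finite-state exploration of the $k$-synchronous semantics, and the reduction of the membership question to the existence of a \emph{borderline} violation confined to a single final exchange, detected by a finite monitor. You correctly identify the borderline normal-form lemma as the crux; in the cited papers it is proved by taking a minimal violation and showing that deleting the last reception yields a feasible $k$-synchronous prefix, which is close to the commutation argument you sketch but is stated in a sharper, checkable form (the witness is an execution that becomes a violation only upon its very last action).

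One step of your reachability argument would fail as written. You claim that along a $k$-synchronous linearization ``every buffer of $\B$ holds at most $k$ pending messages at any point,'' and you build your product automaton on configurations $(\globalstate{l},\B)$ with buffers bounded by $k$. This ignores unmatched sends: in the mailbox semantics a message that is never received stays in the receiver's queue forever, so over a long sequence of exchanges the buffers are \emph{not} bounded by $k$, and your state space is not finite. The repair — used in the cited works and, not coincidentally, reappearing in Section~\ref{sec:rfe} of this paper as the ``buffer state'' $\absBr(\msc)$ — is to observe that the \emph{contents} of unmatched messages never matter again; one only needs to record, per receiver, enough information about which processes have poisoned its queue to enforce causal delivery (once a buffer contains an unmatched message, no causally later message to that buffer may be received). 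With that finite abstraction in place of literal buffer contents, your construction goes through.
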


This result is interesting but somehow incomplete
as it assumes that a fixed value of the parameter $k$
has been found.
We aim at answering this
limitation by
computing the \emph{synchronizability degree}
of a given system.

\begin{definition}[Synchronizability degree]
  The synchronizability degree $\sd{\system}$ of
  a system $\system$ is the smallest $k$ such
  that $\system$ is \kSable{k}. In particular,
  $\sd{\system}=\infty$ if $\system$ is not
  \kSable{k} for any $k$.
\end{definition}

\section{Largest prime reachable exchange\label{sec:lpfe}\label{sec:lpfe}}

In this section, we relate the synchronizability degree of a system to
the size of a ``maximal, prime, reachable exchange".
We start with defining these notions.

An \emph{exchange} is a \kE{k} for some arbitrary $k$, and
we call $k$ the size of the exchange.
An exchange $\msc$ is \emph{reachable} if there exist exchanges
$\msc_1,\cdots,\msc_n$ for some $n\geq 0$
and such that $\msc_1\cdots\msc_n\cdot\msc\in\asTr(\system)$.
An exchange $\msc$ is \emph{prime} if there does not exist a
decomposition $\msc=\msc_1\cdot \msc_2$
in two non-empty exchanges. For instance, the 2-exchange
(depicted by the MSC $\mu_2$,
Fig.~\ref{fig:msc_exchange_not_prime}) with linearization:

\begin{minipage}[c]{9cm}
  $$
  \send{p}{q}{\amessage_1}\cdot\send{r}{q}{\amessage_2}\cdot\receive{q}{\amessage_1}\cdot\receive{q}{\amessage_2}
  $$
  \hspace*{-0.5cm}is not prime, as it can be factored in two 1-exchanges as follows
  $$
  \send{p}{q}{\amessage_1}\cdot\receive{q}{\amessage_1}~~\cdot~~\send{r}{q}{\amessage_2}\cdot\receive{q}{\amessage_2}.
  $$
\end{minipage}
\begin{minipage}[c]{3cm}
  \begin{center}
  \begin{tikzpicture}
    \coordinate (pa) at (0,-0.25) ;
    \coordinate (pb) at (0,-1.5) ;
    \coordinate (qa) at (1,-0.25) ;
    \coordinate (qb) at (1,-1.5) ;
    \coordinate (ra) at (2, -0.25);
    \coordinate (rb) at (2, -1.5);
    \draw (0,0) node{$p$} ;
    \draw (1,0) node{$q$} ;
    \draw (2,0) node{$r$} ;
    \draw (pa) -- (pb) ;
    \draw (qa) -- (qb) ;
    \draw (ra) -- (rb) ;
    \draw[>=latex,->] (0,-0.75) -- (1,-0.75) node[midway, above]{$\amessage_1$};

    \draw[>=latex,->] (2,-1.25) -- (1,-1.25)  node[midway, above]{$\amessage_2$};

  \end{tikzpicture}
\vspace*{-0.3cm}  \captionof{figure}{MSC $\mu_2$}
  \label{fig:msc_exchange_not_prime}
\end{center}
\end{minipage}

\vspace*{0.2cm}
The size of the biggest prime reachable
exchange is related to the synchronizability
degree $\sd{\system}$ by the following property.

\begin{restatable}{lemma}{lemmalpfe}\label{lemma-lpfe}
Let $k\in\mathbb{N}\cup\{\infty\}$ be the
supremum of the sizes of all prime reachable exchanges.
(1) If $k=\infty$, then $\sd{\system}=\infty$
 (2) if $k<\infty$, then either
$\system$ is \kSable{k} and $\sd{\system}=k$,
or $\system$ is not \kSable{k} and $\sd{\system}=\infty$.
\end{restatable}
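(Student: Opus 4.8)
The plan is to reduce the statement to two structural facts about exchanges, plus a fact about reachability. Throughout, write $k$ for the supremum in the statement and $k'$ for a prospective synchronizability bound. The first ingredient is a monotonicity fact: if an exchange $\nu$ factors as $\nu=\alpha\cdot\beta$ into two non-empty MSCs, then $\alpha$ and $\beta$ are again exchanges and their sizes are at most the size of $\nu$ (the number of sends, resp. receives, of a factor is bounded by that of $\nu$, and a sends-before-receives linearisation of $\nu$ restricts to one of each factor, matching staying within factors by the definition of concatenation). Iterating, every exchange, and hence every MSC that can be chopped into exchanges, admits a factorisation into prime exchanges, each of size at most the size of any exchange-block it is cut from.

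The second, crucial, ingredient is that primality is robust across different exchange decompositions. Concretely, I would prove a Levi-style refinement lemma for MSC concatenation: if $\mu=\msc_1\cdots\msc_n\cdot\msc=\nu_1\cdots\nu_m$ are two decompositions of the same MSC into exchanges and $\msc$ is prime, then the events of $\msc$ lie entirely in a single block $\nu_j$. The reason is that each cut $\mu=(\nu_1\cdots\nu_j)\cdot(\nu_{j+1}\cdots\nu_m)$ is a prefix partition of the events; intersecting it with the suffix $\msc$ would split $\msc$ into two non-empty exchanges, contradicting primality, unless the cut leaves all of $\msc$ on one side. It then follows that $\mathrm{size}(\msc)\le\mathrm{size}(\nu_j)\le\max_j\mathrm{size}(\nu_j)$. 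I expect to argue this through the message-dependency graph of $\mu$: its strongly connected components are exactly the prime exchanges, they are determined by $\mu$ alone, and every exchange-block is a union of whole components. This is what sidesteps the fact that the concatenation monoid is only partially commutative, so that prime factorisations are unique only as multisets, not as sequences.

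With these in hand, (1) and the first alternative of (2) are immediate. If $\system$ is \kSable{k'} and $\msc$ is any prime reachable exchange, witnessed by $\msc_1\cdots\msc_n\cdot\msc\in\asTr(\system)$, then this reachable MSC is \kSous{k'}, so by the refinement lemma $\mathrm{size}(\msc)\le k'$. Hence no prime reachable exchange can exceed a working bound: if $k=\infty$ then no finite $k'$ works and $\sd{\system}=\infty$, proving (1); and if in (2) the system is \kSable{k}, then $\sd{\system}\le k$, while taking a prime reachable exchange of size exactly $k$ (the supremum is attained, being a bounded set of integers) shows $\system$ is not \kSable{k-1}, whence $\sd{\system}=k$.

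It remains to treat the second alternative of (2), that $\system$ is not \kSable{k}. Here I use a third ingredient: prefixes of exchange decompositions are reachable. If $\mu=\pi_1\cdots\pi_P\in\asTr(\system)$ is a factorisation into exchanges, then the round-by-round linearisation $\mathrm{lin}(\pi_1)\cdots\mathrm{lin}(\pi_P)$ satisfies causal delivery and is therefore an execution; since executions are prefix-closed, each $\pi_1\cdots\pi_i\in\asTr(\system)$, so every $\pi_i$ is a reachable exchange, and a prime reachable exchange once the $\pi_i$ are taken prime. Now suppose for contradiction that $\system$ were \kSable{k'} for some $k'$. Every $\mu\in\asTr(\system)$ would then be \kSous{k'}, hence decomposable into exchanges, hence (first ingredient) into prime exchanges, each reachable and so of size at most $k$. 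This exhibits every reachable MSC as \kSous{k}, i.e. $\system$ is \kSable{k}, contradicting the hypothesis; thus $\sd{\system}=\infty$, completing (2). The main obstacle is the refinement lemma of the second paragraph: making precise, via the conflict-graph components, that a prime exchange cannot be dispersed across several blocks of an unrelated exchange decomposition.
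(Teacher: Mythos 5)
Your argument is correct and its overall architecture coincides with the paper's: both reduce the lemma to the two facts that (i) if $\system$ is \kSable{K} for some $K$, then every prime reachable exchange has size at most $K$, and (ii) if some $K$ works, then $\system$ is already \kSable{k}, by refining each block of a $K$-synchronous decomposition into prime --- hence reachable, hence size-at-most-$k$ --- exchanges; parts (1) and (2) then follow by the same case analysis. The genuine difference is how you justify the central sub-claim of (i), namely that a prime exchange cannot be dispersed across several blocks of another exchange decomposition. The paper argues through the conflict graph: the prime factor forms a strongly connected component of $\cgraph{\msc_1\cdots\msc_n\cdot\msc}$, and an SCC cannot straddle a block boundary; this leans on the direction ``prime $\Rightarrow$ strongly connected'' of Lemma~\ref{lem:prime-strongly-connected}, whose printed proof in the paper only establishes the converse. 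Your Levi-style cut argument --- each concatenation cut of $\nu_1\cdots\nu_m$ induces a two-part exchange decomposition of the prime factor $\msc$, which primality forces to be trivial, so $\msc$ sits inside a single $\nu_j$ --- is more elementary and self-contained, and avoids that dependency (your fallback via conflict-graph components is exactly the paper's route). You are also more explicit than the paper about why the prime factors arising in (ii) are themselves reachable exchanges, via prefix-closedness of executions and a round-by-round linearization; the paper simply asserts ``they are all reachable,'' implicitly appealing to the proposition it cites later in the proof of Theorem~\ref{th:reachable}. Both routes yield the same conclusion; yours fills in two details the paper leaves tacit.
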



\begin{proof}
Let $k\in \mathbb{N}\cup\{\infty\}$ be the supremum of the sizes of all prime reachable
exchanges.

Assume that there exists $K$ such that $\system$ is
\kSable{K}. Let us show that $k\leq K$ and $\system$ is \kSable{k}.\\
\begin{itemize}
    \item $k\leq K$.
Assume by contradiction that $k\geq K+1$. Then there exists exchanges
$\msc_1,\dots,\msc_n,\msc$ such that
$\msc_1\cdots\msc_n\cdot\msc\in\asTr(\system)$ and
$\msc$ is prime of size $K+1$.
Since $\msc$ is prime, it corresponds to a
strongly connected component
of size $K+1$ of the conflict graph of
$\msc_1\cdots\msc_n\cdot\msc$, so
$\msc_1\cdots\msc_n\cdot\msc$ cannot be \kSous{K}:
contradiction.
    \item $\system$ is \kSable{k}.
Let $\msc\in\asTr(\system)$
be fixed an let us show that it can be chopped into a sequence of $k$ exchanges.
Since by hypothesis $\system$ is \kSable{K}, there
are \kE{K}s
$\msc_1,\dots,\msc_n$ such that $\msc=\msc_1\cdots\msc_n$.
Up to decomposing each $\msc_i$ as a product of prime exchanges,
we can assume that all $\msc_i$ are prime. Moreover, they are
all reachable, so their size is bounded by $k$.
As a consequence, $\msc$ can be decomposed
in a sequence of \kE{k}s.
\end{itemize}
\qed
\end{proof}

Since by Theorem~\ref{thm:ksync-for-fixed-k}
it is decidable whether $\system$ is \kSable{k},
it is enough to know $k$ in order to compute $\sd{\system}$.
In order to compute $k$, we have to address two problems:
the number of exchanges is possibly
infinite, and one should examine sequences of arbitrarily many
exchanges.
To solve these issues, we are going to reduce to a problem on
regular languages.
Let $\Sigma=\{!?,!\}\times\paylodSet\times \procSet^2$; for better readability, we write
$!?\amessage^{p\to q}$ (resp. $!\amessage^{p\to q}$) for a $\Sigma$-symbol.
To every $\Sigma$-word $w$ we associate an MSC $msc(w)$ as follows.
Consider the substitutions $\sigma_1:\Sigma\to S$ and $\sigma_2:\Sigma\to R\cup\{\epsilon\}$ such that
$\sigma_1(!?\amessage^{p\to q})=\sigma_1(!\amessage^{p\to q})=\send{p}{q}{\amessage}$,
$\sigma_2(!?\amessage^{p\to q})=\receive{q}{\amessage}$ and $\sigma_2(!\amessage^{p\to q})=\epsilon$.
Then $msc(w)$ is defined as $msc(\sigma_1(w)\sigma_2(w))$. Clearly, it is an exchange (by construction, it admits a
linearization in $S^*R^*$), but more remarkably any reachable exchange can be represented by such a word.

\begin{restatable}{lemma}{lemsurjectiviteencodage}\label{lem:surjectivite-encodage}
For all reachable exchanges $\msc$, 
there exists $w \in \Sigma^*$ s.t. $\msc=msc(w)$.
\end{restatable}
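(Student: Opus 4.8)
The plan is to build, from a given reachable exchange $\msc$, an explicit word $w$ that simply reads off the sends of $\msc$ in the order of an $S^{\le k}R^{\le k}$-linearization, recording for each send whether or not it is matched. First I would fix a linearization $e=s_1\cdots s_a\,r_1\cdots r_b$ of $\msc$ with all sends before all receives; such a linearization exists precisely because $\msc$ is an exchange, and then $msc(e)=\msc$. Writing $s_i=\send{p_i}{q_i}{\amessage_i}$, I set $w=x_1\cdots x_a$ with $x_i={!?}\amessage_i^{p_i\to q_i}$ when $s_i$ is matched in $\msc$ and $x_i={!}\amessage_i^{p_i\to q_i}$ otherwise. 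By construction $\sigma_1(w)=s_1\cdots s_a$ reproduces exactly the send part of $e$, while $\sigma_2(w)$ lists the receptions of the matched sends taken in the order of their sends.

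It then remains to verify that $msc(w)=msc(\sigma_1(w)\sigma_2(w))$ coincides with $\msc$. The send events and their $\prec_{po}$ order are identical on both sides, and in both MSCs every send precedes every reception on each process, since $e$ and $\sigma_1(w)\sigma_2(w)$ are both of the shape ``sends then receives''. The only two points needing care are the process order $\prec_{po}$ between two receptions of the same process and the matching relation $\prec_{src}$.

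For the matching I would exploit that $\msc=msc(e)$ is built with the FIFO matching rule: for each triple $(p,q,\amessage)$ the matched sends are exactly the earliest ones in send order, the unmatched ones occurring last. Consequently, in $w$ the matched symbols of a given type precede the unmatched ones, so the $\ell$-th reception produced by $\sigma_2$ is matched, by the FIFO rule applied to $\sigma_1(w)\sigma_2(w)$, to the $\ell$-th send of that type, that is, to the very send it was matched to in $\msc$. For the process order on receptions I would invoke the mailbox semantics (causal delivery): since $\msc$ is reachable it is the tail of a genuine mailbox execution, so two matched sends to a common destination $q$ are received in the same order in which they are sent; as $\sigma_2(w)$ emits the receptions exactly in send order, the $\prec_{po}$ it induces on the receptions of $q$ agrees with that of $\msc$.

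The main obstacle I expect is exactly this last point: arguing that collapsing the receptions into send order does not alter the MSC. This is where the mailbox hypothesis is indispensable, because a single incoming queue per process forces the reception order to follow the order of sends towards that process, which is the content of causal delivery and what makes a single left-to-right pass over the sends a faithful encoding. Once these two agreements ($\prec_{src}$ and $\prec_{po}$ on receptions) are established, the natural bijection between events witnesses $msc(w)=\msc$, concluding the proof.
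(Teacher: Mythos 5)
Your encoding (list the sends of $\msc$ in the order of a linearization, tagging each as $!?$ or $!$ according to whether it is matched) is the same as the paper's, but there is a genuine gap in the choice of linearization. You fix an \emph{arbitrary} $S^{*}R^{*}$ linearization $e$ of $\msc$, and the step that fails is precisely the one you identify as delicate: ``as $\sigma_2(w)$ emits the receptions exactly in send order, the $\prec_{po}$ it induces on the receptions of $q$ agrees with that of $\msc$.'' Causal delivery guarantees that \emph{some} linearization orders the sends to each mailbox $q$ consistently with their receptions; it does not guarantee that \emph{every} $S^{*}R^{*}$ linearization does. Two sends $s_1=\send{p}{q}{\amessage_1}$ and $s_2=\send{p'}{q}{\amessage_2}$ with $p\neq p'$ may be incomparable under $(\prec_{po}\cup\prec_{src})^{*}$ even when their receives satisfy $r_2\prec_{po}r_1$ on $q$; then $s_1\,s_2\,r_2\,r_1$ is a perfectly valid $S^{*}R^{*}$ linearization of $\msc$ (and $\msc$ is a reachable exchange for a trivial system), yet the word $w={!?}\amessage_1^{p\to q}\,{!?}\amessage_2^{p'\to q}$ you would build from it satisfies $msc(w)\neq\msc$, because in $msc(w)$ the reception of $\amessage_1$ precedes that of $\amessage_2$ on $q$. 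So with an arbitrary linearization the construction can output the wrong MSC.

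The repair is exactly what the paper does and what your construction silently needs: since $\msc$ is reachable, $\msc_1\cdots\msc_n\cdot\msc\in\asTr(\system)$ admits a linearization that is an actual mailbox execution, and its restriction to the events of $\msc$ is a linearization $lin(\msc)$ in which sends to a common destination $q$ occur in the order $q$ dequeues them (the FIFO buffer forces this). Building $w$ from the send order of $lin(\msc)$, rather than of an arbitrary linearization, makes both of your remaining checks go through: your argument for $\prec_{src}$ (matched sends of a given type precede unmatched ones, and the FIFO rule re-pairs each receive with its original send) and your argument for $\prec_{po}$ on the receptions of each process. In short, the mailbox hypothesis does not rescue an arbitrary linearization after the fact; it must be used up front to select the linearization from which $w$ is read off.
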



\begin{proof}
Let $\msc$ be a reachable exchange, and let $\msc_1,\dots\msc_n$ be such that
$\msc_1\cdot\msc_2\cdots\msc_n\cdot\msc\in\asTr(\system)$.
There is a linearization of $\msc_1\cdots\msc_n\cdot\msc$
which follows the mailbox semantics. This linearization
induces a linearization $lin(\msc)$ of $\msc$ that also
follows the mailbox semantics. Then $lin(\msc)$ induces
an enumeration
$\send{p_1}{q_1}{\amessage_1},\dots,\send{p_n}{q_n}{\amessage_n}$
of the send events of $\msc$.
Let $w=a_1\dots a_n$ where $a_i$ is either $!?\Message{\amessage_i}{p_i}{q_i}$
if $\send{p_i}{q_i}{\amessage_i}$ is matched in $\msc$, or
$!\Message{\amessage_i}{p_i}{q_i}$ if it is unmatched.
Then, the claim is that $msc(w)=\msc$, or in other words,
$\sigma_1(w)\sigma_2(w)$ is a linearization of $\msc$.
By contradiction, assume it is not. Then there are two events
$e,e'$ such that $e<e'$ in the enumeration $\sigma_1(w)\sigma_2(w)$ but
$(e',e)\in(\prec_{po}\cup\prec_{src})^*$.
\begin{itemize}
\item if $e,e'$ are two send events then $e$ occurs before $e'$ in
$\sigma_1(w)$, i.e. $e$ occurs before $e'$ in $lin(\msc)$, which
is a linearization of $\msc$, and the contradiction with $(e',e)\in(\prec_{po}\cup\prec_{src})^*$.
\item if $e$ is a send event and $e'$ a receive event, then
$(e',e)\in(\prec_{po}\cup\prec_{src})^*$ contradicts the fact that $\msc$
is an exchange.
\item if $e$ is a receive event and $e'$ is a send event, then
$e<e'$ wrt $\sigma_1(w)\sigma_2(w)$ contradicts the definition
of $\sigma_1,\sigma_2$.
\item assume finally that $e$ and $e'$ are receive events.
From  $(e',e)\in (\prec_{po}\cup\prec_{src})^*$,
we deduce that $e'\prec_{po} e$ , because $\msc$ is an exchange.
Let $s,s'$ be the matching
send events of $e,e'$ respectively.
Since $e<e'$ wrt $\sigma_1(w)\sigma_2(w)$, $s<s'$
wrt $\sigma_1(w)\sigma_2(w)$, and therefore $s<s'$ wrt $lin(\msc)$.
But $e'<e$ wrt $lin(\msc)$ because $e'\prec_{po} e$, which
violates the mailbox semantics: contradiction.
\end{itemize}\qed
\end{proof}

The proof follows from  the fact that it is always possible to receive  messages in the same global
order as they have been sent. Such a property would not hold for peer-to-peer communications, as we can see in the following counter-example. 
\vspace*{0.2cm}
%

\noindent
\begin{minipage}[c]{7.5cm}
  Consider MSC $\mu_6$ on the right.
  This MSC does not satisfy causal delivery in a mailbox semantics, because the sending of $\amessage_1$
  happens before the sending of $\amessage_4$, and the reception of $\amessage_4$ happens before the reception
  of $\amessage_1$. For this reason, there is no word $w$ such that $msc(w)$ corresponds to this MSC: such a word
  would give a linearization that would correspond to a valid mailbox execution.
  On the other hand, this MSC satisfies causal delivery in a peer-to-peer semantics. For instance, the following
  linearization is a peer-to-peer execution:
  \vspace*{-0.4cm}
  $$
  !\amessage_3\cdot!\amessage_4\cdot!\amessage_1\cdot!\amessage_2\cdot?\amessage_2\cdot?\amessage_3\cdot?\amessage_4\cdot?\amessage_1
  $$
\end{minipage}
\begin{minipage}[c]{5.5cm}

\begin{center}
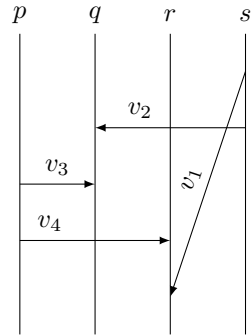

\begin{tikzpicture}

\draw (0,0) node{$p$} ;
\draw (1,0) node{$q$} ;
\draw (2,0) node{$r$} ;
\draw (3,0) node{$s$} ;

  \draw (0,-0.25) -- (0,-4.25) ;
  \draw (1,-0.25)  -- (1,-4.25) ;
  \draw (2,-0.25) -- (2,-4.25) ;
  \draw (3,-0.25) -- (3, -4.25);

\draw[>=latex,->] (3,-0.75) -- node [above,sloped, pos = 0.5] {$\amessage_1$} (2,-3.75);
  \draw[>=latex,->] (3,-1.5) -- node [above,sloped, pos = 0.7] {$\amessage_2$} (1,-1.5);
  \draw[>=latex,->] (0,-2.25) -- node [above,sloped, pos = 0.5] {$\amessage_3$} (1,-2.25);
  \draw[>=latex,->] (0,-3) -- node [above,sloped, pos = 0.2] {$\amessage_4$} (2,-3);
  \end{tikzpicture}
  \captionof{figure}{MSC $\mu_6$}
  \end{center}

  \end{minipage}

We can now define two languages over $\Sigma$:
$$\reachlanguage=\{w\in\Sigma^*\mid msc(w)\mbox{ is reachable}\} \text{ and }
\primelanguage=\{w\in\Sigma^*\mid msc(w) \mbox{ is prime}\}$$
Then the bound $k$ we are looking for is the length of the longest word
in $\reachlanguage\cap \primelanguage$.
It suffices  to show that both $\reachlanguage$
and $\primelanguage$ are effective regular languages
to get an algorithm for computing $k$. This is the content of Sections~\ref{sec:rfe} and \ref{sec:prime}.

\section{Regularity of reachable exchanges\label{sec:rfe}\label{sec:rfe}}

In this section, we aim at defining a finite state automaton that accepts
a word $w\in\Sigma^*$ iff $msc(w)$ is reachable, that is,
iff there exists $\msc_1,\msc_2,\ldots,\msc_n$ such that $
\msc_1\cdot\msc_2\cdots\msc_n\cdot msc(w)\in\asTr(\system).
$
Now, observe that the prefix $\msc_1\cdot\msc_2\cdots\msc_n$
brings the system in a certain global control state that conditions
what can be done by $msc(w)$. Moreover, the presence of unmatched messages in a buffer imposes that none of the subsequent messages sent to the same buffer can be read.

The construction of  the automaton accepting $\reachlanguage$ proceeds
in three separate steps. First, we build an automaton that accepts
the language of all words that code an exchange,
starting  in a certain global control state
$\globalstate{in}$, and ending in another global control state $\globalstate{fin}$, and
possibly not satisfying causal delivery.
Secondly, we consider the set of MSCs that satisfy
causal delivery. We
define automata that recognize the words coding
MSCs starting from a certain ``buffer state'' and ending in
another ``buffer state'', the ``buffer state" characterizing
whether or not the MSC satisfies causal delivery.
Finally, we show that $\reachlanguage$ is a
boolean combination of the languages of some of these automata.


\subsection{Automata of the control states}


We consider triples of global states
$\triple$,  representing the
exchanges
such that
$\globalstatelj$
can be reached only with sends from $\globalstateli$
and $\globalstatelf$
can be reached only with receptions from $\globalstatelj$.
We want to define an automaton $\AutSRauto$ that recognizes
the words coding such exchanges.
Intuitively, $\AutSRauto$ is a product
of on the one hand
the global automaton $\system$ restricted
to send transitions and on the other hand
$\system$ restricted to receive transitions.
For each send action, either the reception is available and a
matched message possible, or there is no corresponding
reception and so we obtain an unmatched message.
\begin{definition}[Automaton of control states] 
	Let $\system$ be a system and $\globalstateli, \globalstatelj,$ $\globalstatelf$  global states.
	$\AutSRauto = (L_{\Aut{SR}}, \delta_{\Aut{SR}}, \globalstate{l^0_{\Aut{SR}}},
	F_{\Aut{SR}})$ is the automaton where:
	\begin{itemize}
		\item $L_{\Aut{SR}} =
		\{ (\globalstate{l}, \globalstate{l'}) \mid~
\globalstate{l}, \globalstate{l'} \in L_{\system} \} $;
		 $\globalstate{l^0_{\Aut{SR}}} = (\globalstateli, \globalstatelj)$; 
		 $ F_{\Aut{SR}} = \{ (\globalstatelj, \globalstatelf) \} $; 
		\item for each $(\globalstate{l_s},\send{p}{q}{\amessage},\globalstate{l'_s}) \in \delta_{\system}$:
		\begin{itemize}
			\item $((\globalstate{l_s}, \globalstate{l}), !\Message{\amessage}{p}{q}, (\globalstate{l'_s}, \globalstate{l}) )\in \delta_{\Aut{SR}}$
			for  $\globalstate{l} \in L_{\system} $;
			\item if $(\globalstate{l_r}, \rec{p}{q}{\amessage}, \globalstate{l'_r}) \in \delta_{\system}$  then $ ((\globalstate{l_s}, \globalstate{l_r}), !?\Message{\amessage}{p}{q},
			(\globalstate{l'_s}, \globalstate{l'_r})) \in \delta_{\Aut{SR}}$
		\end{itemize}
	\end{itemize}
\end{definition}

We denote $\LangSRauto$ the language of a such automaton. This is an example of the construction.
\begin{example}\label{example:aut_SR}
	Let $\system_1$ be the system whose process automata $p,q$ and $r$ are depicted in Fig~\ref{figure:ex_aut_SR}.
	For the triple $\triple$ where $\globalstateli = (0,0,0), \globalstatelj =
	(2,0,1)$ and $\globalstatelf = (2,1,2)$,  
	automaton  $\AutSRauto$ is depicted below the system
	and has for language:
\begin{align*}
	\LangSRauto =~ &
	\Action{!?}{\Message{a}{p}{r}} (
\Action{!}{ \Message{c}{p}{q}} \Action{!?}{\Message{b}{r}{q}} ~+~
\Action{!?}{\Message{b}{r}{q}} \Action{!}{ \Message{c}{p}{q}}) \\
	& + ~ \Action{!?}{\Message{b}{r}{q}} \Action{!?}{\Message{a}{p}{r}}
	\Action{!}{ \Message{c}{p}{q}}
\end{align*}
\vspace*{-1cm}

		\begin{figure}[h]
			\begin{center}
		\begin{tikzpicture}[>=stealth,node distance=2.5cm,shorten >=1pt,
    every state/.style={text=black, scale =0.6}, semithick, scale= 0.7]
  \begin{scope}[->]
      \node[state,initial,initial text={}] (q0)  {$0$};
      \node[state, right of=q0] (q1)  {$1$};
	\node[state, right of=q1] (q2) {$2$};

    \path (q0) edge node [above] {$\Action{!}{\Message{a}{p}{r}}$} (q1);
	\path (q1) edge node [above]    {$\Action{!}{\Message{c}{p}{q}}$}(q2);
    \node[rectangle, thick, draw] at (-0.7,0.6) {$p$};

  \end{scope}

  \begin{scope}[->, xshift=6cm]
      \node[state,initial,initial text={}] (q0)  {$0$};
      \node[state, right of=q0] (q1)  {$1$};

  	\path (q0) edge  node [above]   {$\Action{?}{\Message{b}{r}{q}}$}(q1);
    \node[rectangle, thick, draw] at (-0.7,0.6) {$q$};
  \end{scope}

  \begin{scope}[->, xshift= 10cm]
      \node[state,initial,initial text={}] (q0)  {$0$};
      \node[state,  right of=q0] (q1)  {$1$};
      \node[state,  right of=q1] (q2) {$2$};

    \path (q0) edge node [above] {$\Action{!}{\Message{b}{r}{q}}$} (q1);
    \path (q1) edge node [above] {$\Action{?}{\Message{a}{p}{r}}$} (q2);
    \node[rectangle, thick, draw] at (-0.7,0.6) {$r$};
  \end{scope}

\end{tikzpicture}
		\begin{tikzpicture}[>=stealth,node distance=2.5cm,shorten >=1pt,
    every state/.style={text=black, scale =0.6}, semithick, scale= 0.7]
  \begin{scope}[->, yshift = -6cm, node distance=3.8cm, xshift = 6cm]
      \node[rectangle, thick, draw] at (-5.5,4.5) {$\AutSRauto$};
   	\node[state,
    text width=1.2cm, align = center] (q00)
    {$((0,0,0),$ $(2,0,1))$ 
    };
  	\node[state, right of=q00, text width=1.2cm, align = center] (q10)
    {$((1,0,0),$ $(2,0,1)) $ 
    };
  	\node[state, right of=q10, text width=1.2cm, align = center] (q20)
    {$((2,0,0),$ $(2,0,1)) $ 
    };
  	\node[state, above of=q20, node distance = 3cm, text width=1.2cm,  align = center] (q30)
    {$((2,0,1),$ $(2,0,1)) $
    };
  	\node[state, above of=q10, node distance = 3cm, text width=1.2cm, align = center] (q40)
    {$((1,0,1),$ $(2,0,1)) $ 
    };
  	\node[state, below of=q10, node distance = 3cm, text width=1.2cm, align = center] (q43)
    {$((1,0,1),$ $(2,1,1)) $ 
    };
    \node[state, above of=q00, node distance = 3cm, text width=1.2cm, align = center] (q50)
    {$((0,0,1),$ $(2,0,1)) $ 
    };
   	\node[state, left of=q00, text width=1.2cm, align = center] (q11)
    {$((1,0,0),$ $(2,0,2)) $ 
    };
    \node[state, left of=q11, text width=1.2cm, align = center] (q21)
    {$((2,0,0),$ $(2,0,2)) $ 
    };
  	\node[state, below of=q20, node distance = 3cm, text width=1.2cm, align = center] (q33)
    {$((2,0,1),$ $(2,1,1)) $ 
    };
   	\node[state, above of=q21, node distance = 3cm, text width=1.2cm, align = center] (q31)
    {$((2,0,1),$ $(2,0,2)) $ 
    };
  	\node[state, right of=q31, text width=1.2cm, align = center] (q41)
    {$((1,0,1), $ $ (2,0,2)) $ 
    };
  	\node[state, below of=q00, node distance = 3cm, text width=1.2cm, align = center] (q53)
    {$((0,0,1), $ $(2,1,1))$ 
    };
  	\node[state, below of=q11, node distance = 3cm, text width=1.2cm, align = center] (q42)
    {$((1,0,1), $ $(2,1,2)) $ 
    };
  	\node[state, below of=q21,node distance = 3cm, text width=1.2cm, align = center, accepting] (q32)
    {$((2,0,1), $ $(2,1,2)) $ 
    };

   \path (-0.85,-0.85) edge node{} (-0.55,-0.55);

  	\path (q00) edge node[above] {$\Action{!}{\Message{a}{p}{r}}$} (q10);
  	\path (q00) edge node[left] {$\Action{!}{\Message{b}{r}{q}}$} (q50);
  	\path (q00) edge node[above] {$\Action{!?}{\Message{a}{p}{r}}$}  (q11);
  	\path (q00) edge node[left] {$\Action{!?}{\Message{b}{r}{q}}$} (q53);

  	\path (q10) edge node[above] {$\Action{!}{\Message{c}{p}{q}}$} (q20);
  	\path (q10) edge node[left] {$\Action{!?}{\Message{b}{r}{q}}$}  (q43);
   	\path (q10) edge node[left] {$\Action{!}{\Message{b}{r}{q}}$} (q40);

  	\path (q11) edge node[above] {$\Action{!}{\Message{c}{p}{q}}$} (q21);
  	\path (q11) edge  node[left] {$\Action{!}{\Message{b}{r}{q}}$} (q41);
  	\path (q11) edge node [left] {$\Action{!?}{\Message{b}{r}{q}}$} (q42);

  	\path (q20) edge node[left] {$\Action{!}{\Message{b}{r}{q}}$} (q30);

    \path (q21) edge node[left] {$\Action{!}{\Message{b}{r}{q}}$} (q31);

  	\path (q40) edge node[above] {$\Action{!}{\Message{c}{p}{q}}$} (q30) ;

  	 \path (q41) edge node[above] {$\Action{!}{\Message{c}{p}{q}}$} (q31) ;

  	\path (q42) edge node[above] {$\Action{!}{\Message{c}{p}{q}}$} (q32) ;

  	\path (q43) edge node[above] {$\Action{!}{\Message{c}{p}{q}}$} (q33) ;

  	\path (q50)  edge  node[above] {$\Action{!?}{\Message{a}{p}{r}}$} (q41);
  	\path (q50) edge node[above] {$\Action{!}{\Message{a}{p}{r}}$} (q40);

    \path (q53) edge node[above] {$\Action{!}{\Message{a}{p}{r}}$} (q43);
  	\path (q53)  edge node[above] {$\Action{!?}{\Message{a}{p}{r}}$} (q42);

    \path (q21) edge node [left] {$\Action{!?}{\Message{b}{r}{q}}$} (q32);
    \path (q20) edge node [left] {$\Action{!?}{\Message{b}{r}{q}}$} (q33);

  \end{scope}
\end{tikzpicture}
		\caption{System $\system_1$ and Automaton $\AutSRauto$ }\label{figure:ex_aut_SR}
		\end{center}
	\end{figure}

\end{example}
\begin{restatable}{lemma}{lemmaautSRcorrect}\label{lemma:aut_SR_correct}
$w \in \LangSRauto$ for some $\globalstatelj$ iff $\globalstateli \trMSC{msc(w)} \globalstatelf$.
\end{restatable}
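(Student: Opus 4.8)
The plan is to read the biconditional off the product structure of $\AutSRauto$: its first component runs $\system$ on sends and its second component runs $\system$ on receives, so an accepting run is nothing but an execution of the canonical linearization $\sigma_1(w)\sigma_2(w)$ of $msc(w)$, factored through the global state $\globalstatelj$ reached exactly after all the sends. Before doing either direction I would record the enabling observation that legitimizes the notation $\trMSC{\cdot}$ and lets me work with one convenient linearization: since $\system$ is a product of process automata, a transition on an action $a$ changes only the component of the process performing $a$, so whether a sequence of actions is executable from a given global state, and which state it reaches, depends solely on the per-process subsequences of actions. As every linearization of an MSC induces the same per-process subsequences (each $\prec_{po}$ totally orders the events of one process), $\globalstateli \trMSC{msc(w)} \globalstatelf$ holds iff the particular linearization $\sigma_1(w)\sigma_2(w)$ is executable from $\globalstateli$ and reaches $\globalstatelf$. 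Recall that $\sigma_1(w)\sigma_2(w)$ is by construction a linearization of $msc(w)$ placing all sends before all receives.

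For the forward direction, suppose $w=a_1\cdots a_n\in\LangSRauto$ with accepting run $(\globalstateli,\globalstatelj)=(\globalstate{s_0},\globalstate{r_0})\tr{a_1}\cdots\tr{a_n}(\globalstate{s_n},\globalstate{r_n})=(\globalstatelj,\globalstatelf)$. By definition of $\delta_{\Aut{SR}}$, every letter $a_i$ (whether $!\Message{\amessage_i}{p_i}{q_i}$ or $!?\Message{\amessage_i}{p_i}{q_i}$) advances the first component by the send $\send{p_i}{q_i}{\amessage_i}=\sigma_1(a_i)$, so $\globalstateli\Tr{\sigma_1(w)}\globalstatelj$; and only the $!?$-letters advance the second component, by the matching receive, so $\globalstatelj\Tr{\sigma_2(w)}\globalstatelf$. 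Concatenating gives $\globalstateli\Tr{\sigma_1(w)\sigma_2(w)}\globalstatelf$, hence $\globalstateli\trMSC{msc(w)}\globalstatelf$ by the observation above.

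For the backward direction, from $\globalstateli\trMSC{msc(w)}\globalstatelf$ the observation yields $\globalstateli\Tr{\sigma_1(w)\sigma_2(w)}\globalstatelf$; I take $\globalstatelj$ to be the state reached after the send prefix, so $\globalstateli\Tr{\sigma_1(w)}\globalstatelj\Tr{\sigma_2(w)}\globalstatelf$. I then exhibit a run of $\AutSRauto$ on $w$ by letting $\globalstate{s_i}$ be the state after $\sigma_1(a_1\cdots a_i)$ and $\globalstate{r_i}$ the state after $\sigma_2(a_1\cdots a_i)$, so that $\globalstate{r_i}=\globalstate{r_{i-1}}$ whenever $a_i$ is unmatched. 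A letter-by-letter check shows each step $(\globalstate{s_{i-1}},\globalstate{r_{i-1}})\tr{a_i}(\globalstate{s_i},\globalstate{r_i})$ is a genuine transition of $\delta_{\Aut{SR}}$: for $a_i=!\Message{\amessage_i}{p_i}{q_i}$ one uses the send $\send{p_i}{q_i}{\amessage_i}$ with the second component frozen at $\globalstate{r_{i-1}}$, and for $a_i=!?\Message{\amessage_i}{p_i}{q_i}$ one uses the paired transitions $\send{p_i}{q_i}{\amessage_i}$ and $\rec{p_i}{q_i}{\amessage_i}$, all of which exist precisely because $\sigma_1(w)$ and $\sigma_2(w)$ are executable along these states. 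Since the endpoints are the initial state $(\globalstateli,\globalstatelj)$ and the final state $(\globalstatelj,\globalstatelf)$, we conclude $w\in\LangSRauto$.

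The only delicate point is the bookkeeping that synchronizes the two components: the first advances on every letter while the second advances only on matched ($!?$) letters, so the receive states $\globalstate{r_i}$ must be indexed by positions in $w$ (freezing on unmatched letters) rather than by positions in $\sigma_2(w)$. Everything else is a direct unfolding of the definition of $\AutSRauto$, once the linearization-invariance of the product automaton has been set up to bridge between $\trMSC{\cdot}$ and executability of the canonical linearization $\sigma_1(w)\sigma_2(w)$.
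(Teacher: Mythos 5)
Your proposal is correct and follows essentially the same route as the paper's proof: both reduce acceptance by $\AutSRauto$ to the two runs $\globalstateli\Tr{\sigma_1(w)}\globalstatelj$ and $\globalstatelj\Tr{\sigma_2(w)}\globalstatelf$ via the product structure (the paper does this by "an easy induction on the length of $w$", which is exactly your letter-by-letter bookkeeping), and then identify $\sigma_1(w)\sigma_2(w)$ as a linearization of $msc(w)$. Your explicit justification that executability and the reached state are invariant across linearizations (via per-process subsequences) is a point the paper leaves implicit, but it is the same argument.
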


\begin{proof}
Observe that, by construction of $\Aut{SR}\triple$,
$\globalstate{l^0_{\Aut{SR}}}\Tr{w}
(\globalstate{l},\globalstate{l'})$
iff
$\globalstate{in}\Tr{\sigma_1(w)}\globalstate{l}$
and
$\globalstate{mid}\Tr{\sigma_2(w)}\globalstate{l'}$ (this can
be shown by an easy induction on the length of $w$).
In particular, $w$ is accepted iff
$\globalstate{in}\Tr{\sigma_1(w)}\globalstate{mid}$
and
$\globalstate{mid}\Tr{\sigma_2(w)}\globalstate{fin}$,
which is equivalent to
$\globalstate{in}\trMSC{msc(w)}\globalstate{fin}$.\qed
\end{proof}

%
%

\subsection{Automata of  causal delivery exchanges}


Let us now move to the trickier part, namely the recognition
of words coding MSCs that satisfy causal delivery.
Let $\msc=(Ev,\lambda,\prec_{po},\prec_{src})$ be an MSC,
and  $v\in\lambda^{-1}(S)$  a send event, we write $ev_S(v)$ for the event $v$ and, when it exists, $ev_R(v)$ for the
event $v'\in\lambda^{-1}(R)$ such that $v\prec_{src} v'$. We say that $v$ is unmatched

\noindent
\begin{minipage}[c]{6.5cm}
  if $ev_R(v)$ is undefined. We recall from \cite{DBLP:conf/cav/BouajjaniEJQ18}
the notion of conflict graph.  Intuitively, it captures some (but not all) causal
dependencies between events.
 The figure on the right represents an MSC and its associated conflict graph.
\end{minipage}
\begin{minipage}[c]{5.5cm}
  \hspace*{0.3cm}
  \vspace*{-0.2cm}
\begin{tikzpicture}[baseline, scale = 0.8]
\begin{scope}[shift = {(6.2,0)}]
    \coordinate(pa) at (0,-0.25) ;
    \coordinate (pb) at (0,-2.25) ;
    \coordinate (qa) at (1,-0.25) ;
    \coordinate (qb) at (1,-2.25) ;
    \coordinate (ra) at (2,-0.25) ;
    \coordinate (rb) at (2,-2.25) ;
    \draw (0,0) node{$p$} ;
	\draw (1,0) node{$q$} ;
	\draw (2,0) node{$r$} ;
    \draw (pa) -- (pb) ;
    \draw (qa) -- (qb) ;
    \draw (ra) -- (rb) ;

    \coordinate (s1) at (1,-0.75);
    \coordinate (r1) at (0,-0.75);
    \draw[>=latex,->] (s1) -- node [above,sloped] {$\amessage_1$} (r1);
    \coordinate (s2) at (0, -1.85);
    \coordinate (r2) at (1, -1.85);
    \coordinate (s3) at (1, -1.3);

    \coordinate (r3) at (2, -1.3);

    \draw[>=latex,->] (s2) -- node [above,sloped] {$\amessage_2$} (r2);
    \draw[>=latex,->] (s3) -- node [above,sloped] {$\amessage_3$} (r3);
\end{scope}
\begin{scope}[shift = {(9.5, -1.75)}]
	\node[draw] (m1) at (0,0) {$v_1$};
	\node[draw] (m2) at (2,0) {$v_2$};
	\node[draw] (m3) at (0,1.55) {$v_3$};

	\draw[->] (m1) -- node [above] {RS} node [below] {SR} (m2);
	\draw[->] (m1) -- node [left] {SS} (m3);
	\draw[->] (m3) -- node [above,sloped] {SR} (m2);
\end{scope}
\end{tikzpicture}
\vspace*{-0.1cm}
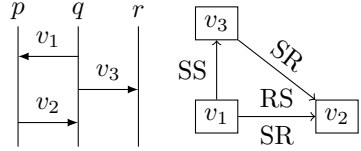
\captionof{figure}{MSC $\mu_3$ and its conflict graph}
\end{minipage}
\begin{definition}[Conflict Graph]\label{def:conflict-graph}
  The conflict graph $\cgraph{\msc}$ of an MSC $\msc=(Ev,\lambda,\prec_{po},\prec_{src})$ is the labeled
  graph $(V,\{\cgedge{XY}\}_{X,Y\in\{R,S\}})$ where $V=\lambda^{-1}(S)$,
  and for all $v,v'\in V$,
  there is a $XY$ dependency edge $v\cgedge{XY}v'$ between $v$ and $v'$ $(X,Y\in\{S,R\})$,
  if $ev_X(v)$ and $ev_Y(v')$ are defined and $ev_X(v)\prec_{po}ev_Y(v')$.
\end{definition}

\noindent
The \emph{extended conflict graph}~\cite{DBLP:conf/fossacs/GiustoLL20} $\ecgraph{\msc}$ is obtained by adding
all dashed edges $v\cgedgeD{XY}v'$ satisfying the relation $\cgedgeD{XY}$
in Fig.~\ref{fig:conflict-graph-extension}.
Intuitively, $v \cgedgeD{XY}v'$ expresses that
the event $X$ of $v$ must happen before  the event  $Y$ of $v'$ due to: their
order on the same machine (Rule 1), or the fact that a send
happens before its matching receive (Rule 2), or  to the mailbox
semantics (Rules 3 and 4), or because of a chain of such dependencies (Rule 5). This captures all
constraints induced by the mailbox communication, and it has been shown that an  MSC satisfies causal delivery if and only if its
extended conflict graph is acyclic (Theorem 2 in  \cite{DBLP:conf/fossacs/GiustoLL20}).

\begin{figure}[t]
\begin{center}
\begin{minipage}{.33\columnwidth}
\begin{center}
\AxiomC{\small $v_1 \cgedge{XY} v_2$}
\LeftLabel{\small (Rule 1)}
\UnaryInfC{\small $v_1 \cgedgeD{XY} v_2$}
\DisplayProof
\end{center}
\end{minipage}
\begin{minipage}{.33\columnwidth}
\begin{center}
\AxiomC{\small$v$ is matched}
\LeftLabel{\small(Rule 2)}
\UnaryInfC{\small$v \cgedgeD{SR} v$}
\DisplayProof
\end{center}
\end{minipage}
\begin{minipage}{.33\columnwidth}
\begin{center}
\AxiomC{\small$v_1 \cgedge{RR} v_2$}
\LeftLabel{\small(Rule 3)}
\UnaryInfC{\small $v_1 \cgedgeD{SS} v_2$}
\DisplayProof
\end{center}
\end{minipage}
\end{center}

\begin{center}
\begin{minipage}{.5\columnwidth}
\begin{center}
 \AxiomC{\small
    \begin{tabular}{c}
	$v_1$ is matched
    \qquad
    $v_2$ is unmatched
    \\
    $\receiver{v_1}=\receiver{v_2}$
  \end{tabular}}
  \LeftLabel{\small(Rule 4)}
\UnaryInfC{\small$v_1 \cgedgeD{SS} v_2$}
\DisplayProof
\end{center}
\end{minipage}
\begin{minipage}{.5\columnwidth}
\begin{center}
\AxiomC{\small$v_1 \cgedgeD{XY}\cgedgeD{YZ} v_2$}
\LeftLabel{\small(Rule 5)}
\UnaryInfC{\small$v_1 \cgedgeD{XZ} v_2$}
\DisplayProof
\end{center}
\end{minipage}
\end{center}

 \vspace*{-0.5cm}
\caption{\label{fig:rules-cgedgeD}Deduction rules for extended dependency edges of the conflict graph \vspace*{-0.5cm}}
\label{fig:conflict-graph-extension}
\end{figure}

We  build an automaton that recognizes the words $w$
such that $msc(w)$ satisfies causal delivery.
To this aim, we associate to each MSC a ``buffer state'' that
contains enough information to determine whether its extended conflict
graph is acyclic.
%
We write $\absBrSet$ for the set $(2^{\procSet} \times 2^{\procSet} )^{\procSet}$. The \emph{buffer state} $\absBr(\msc)\in\absBrSet$
of the MSC $\msc$ is the tuple
$\absBr(\msc)=(\mathcal{C}^{\msc}_{S,p}, \mathcal{C}^{\msc}_{R,p})_{p\in \procSet}$
such that for all $p\in\procSet$:
$$
\begin{array}{ll}
\Crxp{S}{p}{\msc}= &\{\sender{v}\mid v'\cgedgeD{SS}v~\&~ v'\mbox{ is unmatched }\&~\receiver{v'}=p\}~ \cup\\
& \{\sender{v}\mid  v\mbox{ is unmatched} ~\&~\receiver{v}=p \}\\
\Crxp{R}{p}{\msc}= &\{\receiver{v}\mid v'\cgedgeD{SS}v~\&~ v'\mbox{ is unmatched }\&~ \receiver{v'}=p~\&~v\mbox{ is matched}\}
\end{array}
$$
We can show that the $\ecgraph{\msc}$ is acyclic if for all $p \in \procSet$, $p\not\in \Crxp{R}{p}{\msc}$ (im\-mediate consequence of Theorem 2 in \cite{DBLP:conf/fossacs/GiustoLL20}). Moreover, we write
$\absBrSetGood$ for the subset of
$\absBrSet$ formed by the tuples $(C_{S,p},C_{R,p})_{p\in\procSet}$
such that $p\not\in C_{R,p}$ for all $p$.

\begin{proposition}[\cite{DBLP:conf/fossacs/GiustoLL20}]
  \label{prop:causaldeliv}
For $w \in \Sigma^*$, $msc(w)$ satisfies causal delivery if and only if  $ \absBr(\msc(w)) \in \absBrSetGood$.
\end{proposition}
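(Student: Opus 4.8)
The plan is to route the equivalence through the acyclicity characterization of causal delivery. By Theorem~2 of \cite{DBLP:conf/fossacs/GiustoLL20}, $msc(w)$ satisfies causal delivery if and only if its extended conflict graph $\ecgraph{msc(w)}$ is acyclic. Since the membership condition defining $\absBrSetGood$ only constrains the receive-components, namely $p\notin\Crxp{R}{p}{\msc}$ for every $p\in\procSet$, the whole statement reduces, writing $\msc=msc(w)$, to proving the equivalence between acyclicity of $\ecgraph{\msc}$ and the condition $p\notin\Crxp{R}{p}{\msc}$ for all $p$.

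First I would dispatch the easy implication, namely that $p\in\Crxp{R}{p}{\msc}$ for some $p$ forces a cycle. Unfolding the definition of $\Crxp{R}{p}{\msc}$, the membership $p\in\Crxp{R}{p}{\msc}$ provides two messages $v$ and $v'$ with $v'$ unmatched, $v$ matched, $\receiver{v'}=\receiver{v}=p$, and $v'\cgedgeD{SS}v$. Because $v$ is matched, $v'$ is unmatched and they target the same receiver $p$, Rule~4 of Fig.~\ref{fig:conflict-graph-extension} yields the opposite edge $v\cgedgeD{SS}v'$; composing the two with Rule~5 gives a self-loop $v\cgedgeD{SS}v$, so $\ecgraph{\msc}$ is cyclic. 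By Theorem~2 this shows that the failure of the condition implies the failure of causal delivery, i.e.\ the ``only if'' part.

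For the converse I would prove the contrapositive: if $\ecgraph{\msc}$ is cyclic then $p\in\Crxp{R}{p}{\msc}$ for some $p$. This is precisely the reformulation recalled just before the proposition, and it is where the content of Theorem~2 is really used. Concretely, I would reason on a shortest cycle of $\ecgraph{\msc}$: minimality forbids any Rule~5 shortcut, so at every node the incoming and outgoing ports must differ, and the key claim is that such a cycle must traverse at least one Rule~4 edge, since Rules~1--3 and~5 only encode dependencies compatible with the send-before-receive order and the mailbox FIFO discipline and cannot by themselves close a cycle. A Rule~4 edge runs from a matched message to an unmatched message sharing a receiver $p$; following the rest of the cycle from that unmatched endpoint back to the matched one produces, after composition by Rule~5, an edge $v'\cgedgeD{SS}v$ with $v'$ unmatched, $v$ matched and $\receiver{v'}=\receiver{v}=p$, which is exactly $p\in\Crxp{R}{p}{\msc}$.

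I expect this last step to be the main obstacle: the easy implication and the translation between causal delivery and acyclicity are immediate from Rule~4 and Theorem~2, whereas controlling the shape of an arbitrary cycle and folding it onto the single unmatched-to-matched $\cgedgeD{SS}$ pattern with a common destination is the delicate part that genuinely relies on the analysis underlying Theorem~2.
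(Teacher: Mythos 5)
The paper does not actually prove this proposition: it is imported verbatim from \cite{DBLP:conf/fossacs/GiustoLL20} (Theorem~2 there), and the surrounding text merely records the two facts you are trying to re-derive (causal delivery iff $\ecgraph{\msc}$ acyclic, and acyclicity iff $p\notin\Crxp{R}{p}{\msc}$ for all $p$). So you are attempting more than the paper does; your skeleton is the right one, and your easy direction (Rule~4 plus Rule~5 turning $p\in\Crxp{R}{p}{\msc}$ into an $SS$ self-loop) is correct. But the hard direction as written has a genuine gap.

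The gap is the ``key claim'' that every cycle of the extended conflict graph must traverse a Rule~4 edge because ``Rules~1--3 and~5 \dots cannot by themselves close a cycle.'' As a statement about arbitrary MSCs this is false, and the paper itself exhibits the counterexample: $\mu_6$ has no unmatched message (so Rule~4 never fires and $\Crxp{R}{p}{\mu_6}=\emptyset$ for every $p$), yet it violates causal delivery, and its extended conflict graph contains the cycle on the vertices of $\amessage_1,\amessage_2,\amessage_3,\amessage_4$ built from the two $\cgedge{SS}$ conflicts at $s$ and $p$ together with the two $\cgedgeD{SS}$ edges obtained by Rule~3 from the $\cgedge{RR}$ conflicts at $q$ and $r$. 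Rule~3 injects mailbox constraints that are \emph{not} consequences of the partial order of the MSC --- that is precisely why causal delivery is nontrivial --- so your justification does not hold. The claim you need is true only for MSCs of the form $msc(w)$, and proving it requires the defining feature of $msc(w)$ that your argument never invokes: in the linearization $\sigma_1(w)\sigma_2(w)$ messages are received in the global order in which they are sent, hence every edge produced by Rules~1, 2 and~3 is compatible with the strict order of event positions in $\sigma_1(w)\sigma_2(w)$, leaving Rule~4 as the only possible source of a ``backward'' edge in a cycle. With that observation inserted, the rest of your plan goes through, modulo one further point you should make explicit: for Rule~5 to fold the remainder of the cycle into a single edge $v'\cgedgeD{SS}v$, the port labels along the cycle must chain, which depends on the exact notion of cycle used in Theorem~2 of \cite{DBLP:conf/fossacs/GiustoLL20} (note that literal digraph acyclicity cannot be meant, since Rule~2 already creates $SR$ self-loops on every matched vertex).
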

%

%
%
%

Noticing
that $\absBrSet$ is finite, we build an automaton
$\cdauto{\Br_0}{\Br_1}$ with $\Br_0,\Br_1 \in \absBrSet$.
The intuition behind these two buffer states
is that $\Br_0$ summarises the conflict graph
derived from previous exchanges and $\Br_1$ summarises the
conflict graph obtained when a new exchange is added.


\begin{definition}[Automaton of causal exchanges] \label{def:feasibleuto}
The automaton $\cdauto{\Br_0}{\Br_1} $ is defined as follows:
\begin{itemize}
\item $\absBrSet$ is the set of states,
\item $\Br_0$ is the initial state (hereafter, we assume that $\Br_0 = (C^{(0)}_{S,p},C^{(0)}_{R,p})_{p\in \procSet}$).
\item $\{\Br_1$\} is the set of final states
\item the transition relation $(\tr{a})_{a\in\Sigma}$ is defined as follows:
\begin{itemize}
\item $(C_{S,p},C_{R,p})_{p\in \procSet} \tr{!?\amessage^{p\to q}}(C'_{S,p},C'_{R,p})_{p\in \procSet}$ holds if for all $r\in\procSet$: let the intermediate set $C_{S,r}''$ be defined by
$$
C_{S,r}''=\left\{\begin{array}{l}
C_{S,r}\cup\{p\} \mbox{ if }p\in C_{R,r}^{(0)}\mbox{ or }q\in C_{R,r}\\
C_{S,r}\mbox{ otherwise}
\end{array}\right.
$$
Then
$$
\hspace*{-1cm}
C_{S,r}'=\left\{\begin{array}{l}
C_{S,r}''\cup C_{S,q} \mbox{ if }p\in C_{S,r}''\\
C_{S,r}\mbox{ otherwise}
\end{array}\right.
\quad\mbox{and}\quad
C_{R,r}'=\left\{\begin{array}{l}
C_{R,r}\cup \{q\} \cup C_{R,q} \mbox{ if }p\in C_{S,r}''\\
C_{R,r}\mbox{ otherwise}
\end{array}\right.
$$
\item $(C_{S,p},C_{R,p})_{p\in \procSet} \tr{!\amessage^{p\to q}}(C'_{S,p},C'_{R,p})_{p\in \procSet}$ holds if for all $r\in\procSet$,
$$
C_{S,r}'=\left\{\begin{array}{l}
C_{S,r}\cup \{p\} \mbox{ if }q=r\mbox{ or } q\in C_{R,r}\\
C_{S,r}\mbox{ otherwise}
\end{array}\right.
\quad\mbox{and}\quad
C_{R,r}'=C_{R,r}
$$

\end{itemize}
\end{itemize}
Let $\mathcal{L}(\Br_0, \Br_1)$ denote
the language recognized by $\cdauto{\Br_0}{\Br_1}$.
\end{definition}

\begin{example}
Consider $\mu_4=msc(w)$ with $w=!\amessage_3^{p_1\to p_2}!?\amessage_4^{p_3\to p_2}!?\amessage_5^{p_4\to p_6}!?\amessage_6^{p_6\to p_7}$
\\ \noindent
\begin{minipage}[c]{6cm}
and assume we start with $\Br_{0}$ such that $C_{S,p_5}=\{p_4\}$ and $C_{R,p_5}=\{p_3\}$.
Then the update of $\Br$ (or, more precisely, of $C_{S,p_5}$, $C_{R,p_5}$, and $C_{S,p_2}$) after reading each message is shown below. Note how $\amessage_6$
has no effect, despite the fact that $p_6\in C_{R,p_5}$ at the time the message is read.
\end{minipage}
\begin{minipage}[c]{7cm}
  \vspace*{-0.3cm}
  \hspace*{0.3cm}
\begin{tikzpicture}[baseline]
\node at (0,-0.5) {$p_1$};
\node at (0.85,-0.5) {$p_2$};
\node at (1.7,-0.5) {$p_3$};
\node at (2.55,-0.5) {$p_4$};
\node at (3.4,-0.5) {$p_5$};
\node at (4.25,-0.5) {$p_6$};
\node at (5.1,-0.5) {$p_7$};
\draw (0,-.75) -- (0,-3.25);
\draw (0.85,-.75) -- (0.85,-3.25);
\draw (1.7,-.75) -- (1.7,-3.25);
\draw (2.55,-.75) -- (2.55,-3.25);
\draw (3.4,-.75) -- (3.4,-3.25);
\draw (4.25,-.75) -- (4.25,-3.25);
\draw (5.1,-.75) -- (5.1,-3.25);
\draw[dashed,->,>=latex] (2.55,-1.25) -- node[above]{$\amessage_1$} (3.4,-1.25);
\draw[->,>=latex] (2.55,-1.75) -- node[above] {$\amessage_2$} (1.7,-1.75);
\draw[dashed] (-0.5,-2) -- (5.5,-2);
\draw[dashed,->,>=latex] (0,-2.5) -- node[above] {$\amessage_3$} (0.85,-2.5);
\draw[->,>=latex] (1.7,-3) -- node [above] {$\amessage_4$} (0.85,-3);
\draw[->,>=latex] (2.55,-2.5) -- node [sloped,above, pos=0.3] {$\amessage_5$} (4.25,-3);
\draw[->,>=latex] (4.25,-2.5) -- node [above] {$\amessage_6$} (5.1,-2.5);
\end{tikzpicture}
\vspace*{-0.3cm}
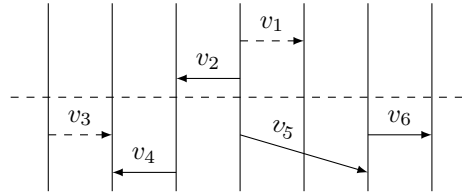
\captionof{figure}{MSC $\mu_4$}
\end{minipage}

\vspace{0.3cm}
\begin{tabular}{l|ccccccccc}
$C_{S,p_5}$ & $\{p_4\}$ && $\{p_4\}$ && $\{p_1,p_3,p_4\}$ && $\{p_1,p_3,p_4\}$ && $\{p_1,p_3,p_4\}$ \\
$C_{R,p_5}$ & $\{p_3\}$ & $\tr{!\amessage_3}$ & $\{p_3\}$ &
$\tr{!?\amessage_4}$ & $\{p_2,p_3\}$ & $\tr{!?\amessage_5}$ & $\{p_2,p_3,p_6\}$
& $\tr{!?\amessage_6}$ & $\{p_2,p_3,p_6\}$ \\
$C_{S,p_2}$ & $\emptyset$ && $\{p_1\}$ && $\{p_1\}$ && $\{p_1\}$ && $\{p_1\}$
\end{tabular}
\end{example}

Next lemma states that $\anauto(\Br,\Br')$
recognizes the words $w$ such that $msc(w)$,
starting with an initial buffer state $\Br$,
ends in final buffer state $\Br'$.

\begin{restatable}{lemma}{lemmacausaldel}\label{lemma:language_B}
Let $\Br, \Br' \in \absBrSet$ and $w\in\Sigma^*$. Then
$w\in \mathcal{L}(\Br, \Br')$ if and only if for all MSC
$\msc$ such that $\Br = \absBr(\mu)$,
$\Br' = \absBr(\mu \cdot msc(w))$.
\end{restatable}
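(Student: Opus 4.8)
The plan is to isolate a single structural invariant and derive the biconditional from it. Once the initial state $\Br_0=\Br$ is fixed, the transition relation of Definition~\ref{def:feasibleuto} is a total function of the current state and the read symbol (the constants $C^{(0)}_{S,r},C^{(0)}_{R,r}$ being read off $\Br$), so the automaton is deterministic; writing $\delta^*(\Br,w)$ for the state reached from $\Br$ after reading $w$, acceptance amounts to $w\in\mathcal{L}(\Br,\Br')\iff\delta^*(\Br,w)=\Br'$. I would then reduce the statement to the identity $(\star)$:
\[
\delta^*(\absBr(\mu),w)=\absBr(\mu\cdot msc(w))\qquad\text{for every MSC }\mu\text{ and every }w\in\Sigma^*,
\]
where the constants in the rules are instantiated to the components of $\absBr(\mu)=\Br$. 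Granting $(\star)$, the forward implication is immediate, and for the converse one uses that every buffer state arising in the construction is realised by some $\mu$ with $\absBr(\mu)=\Br$, whence $\delta^*(\Br,w)=\absBr(\mu\cdot msc(w))=\Br'$. (For buffer states not of the form $\absBr(\mu)$ the right-hand side of the equivalence is vacuously true; this does not affect the uses of the lemma, where $\Br$ is always realised.)

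I would prove $(\star)$ by induction on $|w|$. The base case $w=\epsilon$ is trivial, since $msc(\epsilon)$ is the empty MSC and $\delta^*(\Br,\epsilon)=\Br$. For the step $w=w'a$, the induction hypothesis gives $\delta^*(\absBr(\mu),w')=\absBr(\nu)$ with $\nu=\mu\cdot msc(w')$, so it remains to show that the single transition on $a$ takes $\absBr(\nu)$ to $\absBr(\nu')$, where $\nu'=\mu\cdot msc(w'a)$. The first task is to describe precisely how $\nu'$ arises from $\nu$: by Lemma~\ref{lem:surjectivite-encodage} and the definition of $msc$, $\nu'$ is exactly $\nu$ with one message $a$ inserted, its send placed after all sends coming from $w'$ (and after all of $\mu$) but before the receives coming from $w'$, and, if $a$ is matched, its receive appended after all existing receives on $\receiver{a}$. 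The effect on the extended conflict graph is thus local to the edges incident to this new send (and receive), and I would compute the resulting change in $\Crxp{S}{r}{\nu}$ and $\Crxp{R}{r}{\nu}$, splitting on whether $a=!\amessage^{p\to q}$ is unmatched or $a=!?\amessage^{p\to q}$ is matched.

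The technical heart, and the step I expect to be the main obstacle, is the matched case: one must verify that the two-stage update — first the intermediate set $C''_{S,r}$, then $C'_{S,r}$ and $C'_{R,r}$ — reproduces exactly the new $\cgedgeD{SS}$-reachability from unmatched sends created by inserting $a$. The guiding observation is that $\absBr$ records only dependencies emanating from an \emph{unmatched} send, so one only needs to detect when the new send becomes $\cgedgeD{SS}$-reachable from such a source; this happens precisely when the new send lies behind an unmatched message to $r$, which the guard ``$p\in C^{(0)}_{R,r}$ or $q\in C_{R,r}$'' captures using, respectively, the receptions inherited from $\mu$ and the matched receptions already recorded from $w'$, while the unions with $C_{S,q}$ and $C_{R,q}$ propagate the dependency transitively (Rule~5) to every message already known to lie behind $a$. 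The delicate point is to argue that placing the new send before the $w'$-receives introduces no spurious $SR$/$RS$ edges that would enlarge the tracked sets; this follows because such edges can only extend $\cgedgeD{SS}$ paths whose unmatched source is already accounted for, and the acyclicity characterisation underlying Proposition~\ref{prop:causaldeliv} ensures the incremental description is faithful. The unmatched case is analogous but simpler, as $a$ then becomes a fresh source of $\cgedgeD{SS}$ edges rather than a target.
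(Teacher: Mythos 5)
Your proposal follows essentially the same route as the paper: an induction on $|w|$ maintaining the invariant that the state reached by the deterministic automaton after reading a prefix equals $\absBr(\mu\cdot msc(\text{prefix}))$, followed by a case analysis on the last symbol matching the two-stage update rules against the new $\cgedgeD{SS}$ dependencies created by inserting the corresponding send (and receive) into the MSC. You correctly identify the key subtlety (the new send sits after the $\mu$-receives but before the $w'$-receives of its sender, which is why the guard distinguishes $C^{(0)}_{R,r}$ from the current $C_{R,r}$), and you are in fact slightly more careful than the paper about the direction of the biconditional for buffer states not realised by any $\mu$.
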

\begin{proof}
Take $w = a_0 \dots a_n \in \Sigma^*$. To prove the lemma it is sufficient to show that if    $\Br = \absBr(\mu)$ and :
$$\Br = (C^{(0)}_{S,p}, C^{(0)}_{R,p})_{p \in \procSet} \tr{a_{0}}  (C^{(1)}_{S,p}, C^{(1)}_{R,p})_{p \in \procSet} \tr{a_{1}} \cdots \tr{a_n} (C^{(n+1)}_{S,p}, C^{(n+1)}_{R,p})_{p \in \procSet} = \Br',$$
then $\Br' = \absBr(\mu \cdot msc(w))$.

The proof proceeds by induction on the length of $w$ where the inductive hypothesis is that $(C^{(n)}_{S,p}, C^{(n)}_{R,p})_{p \in \procSet} = \absBr(\mu \cdot msc(a_0 \dots a_{n-1}))$.

We start by showing that $\forall r \in \procSet$, $C^{(n+1)}_{S,r} = \Crxpprime{S}{r}{\msc \cdot msc(w)}$.
Suppose that $p \in C^{(n+1)}_{S,r}$. If $p \in C^{(0)}_{S,r}$ then we can immediately conclude that $p \in \Crxpprime{S}{r}{\msc \cdot msc(w)}$ since $C^{(0)}_{S,r} = \Crxp{S}{r}{\msc} \subseteq C^{(n+1)}_{S,r}$ and function $\absBr(\cdot)$ is increasing monotone.
Instead if $p \notin C^{(0)}_{S,r}$ then the following can happen (without loss of generality suppose that $p$ has been added while reading the last symbol of $w$):
 \begin{itemize}
 \item $a_n=  !?\amessage^{p\to q}$ and $p \in C^{(0)}_{R, r}= \Crxp{R}{r}{\msc}$

 Then there exists a message in $\msc$, $\amessage'' $ such that $\receiver{v''} = p$ and $v'\cgedgeD{SS}v''$ with $v'$ unmatched. Then it is easy to see that $v''\cgedgeD{SS}v$ and therefore $p \in \Crxpprime{S}{r}{\msc \cdot msc(w)}$;

 \item $a_n= !?\amessage^{p\to q}$ and $q \in C^{(n)}_{R, r} = \Crxp{R}{r}{\msc \cdot msc(a_0 \dots a_{n-1})}$

 Then there exists a message $\amessage'' $ in $\msc \cdot msc(a_0 \dots a_{n-1})$,  such that $\receiver{v''} = q$ and $v'\cgedgeD{SS}v''$ with $v'$ unmatched. Then it is easy to see that $v''\cgedgeD{SS}v$ and therefore $p \in \Crxpprime{S}{r}{\msc \cdot msc(w)}$;

 \item $a_n=  !?\amessage^{p'\to q}$ and $p' \in C''^{(n)}_{S, r}$ and $p \in C^{(n)}_{S,q}= \Crxp{S}{q}{\msc \cdot msc(a_0 \dots a_{n-1})}$

  Then there exists a message $\amessage'' $ in $\msc \cdot msc(a_0 \dots a_{n-1})$,  such that $\sender{v''} = p$ and $v'\cgedgeD{SS}v''$ with $v'$ unmatched and $\receiver{v'} = q$. Then it is easy to see that $v\cgedgeD{SS}v''$ and since $p' \in C''^{(n)}_{S, r}$ and with an analysis similar to the one above we have $v'''\cgedgeD{SS}v$ with $v'''$ unmatched and  we can conclude $p \in \Crxpprime{S}{r}{\msc \cdot msc(w)}$.

 \item $a_n=  !\amessage^{p\to q}$ and $p \in C_{R, r}$

 Analogous to case 2 above.

\item $a_n= !\amessage^{p\to r}$

We can immediately conclude $p \in \Crxpprime{S}{r}{\msc \cdot msc(w)}$ as $v$ is unmatched and $\receiver{v}=r$.

 \end{itemize}

Now suppose that $p \in \Crxpprime{S}{r}{\msc \cdot msc(w)}$ (without loss of generality we can assume $p \notin \Crxpprime{S}{r}{\msc \cdot msc(a_0 \dots a_{n-1})}$). Then either  $a_n = ! \amessage ^{p\to r}$ then it is immediate to see that $p \in C^{(n+1)}_{S,r}$ or  $a_n = !? \amessage ^{p\to q}$ and $v'\cgedgeD{SS}v$ for some $v'$ unmatched and $\receiver{v'} = r$.
  $p \notin \Crxpprime{S}{r}{\msc \cdot msc(a_0 \dots a_{n-1})}$ entails that either $q\in  \Crxpprime{R}{r}{\msc \cdot msc(a_0 \dots a_{n-1})}$ or $p\in  \Crxpprime{R}{r}{\msc}$ (notice that since $w$ is an exchange, $p\notin  \Crxpprime{R}{r}{\msc \cdot msc(a_0 \dots a_{n-1}) } \setminus  \Crxpprime{R}{r}{\msc}$ ). In both cases we can conclude $p \in C^{(n+1)}_{S,r}$.

Next we show that  $\forall r \in \procSet$, $C'_{R,r} = \Crxpprime{R}{r}{\msc \cdot msc(w)}$.
Suppose that $p \in C^{(n+1)}_{R,r}$ (without loss of generality we can assume $p \notin C{(n+1)}_{R,r}$). This entails that:
\begin{itemize}
\item  either $a_n = !? \amessage ^{q \to p}$ with $q \in C''^{(n+1)}_{S,r} =  \Crxpprime{S}{r}{\msc \cdot msc(w)}$:

Then there exists a message $\amessage'' $ in $\msc \cdot msc(w)$,  such that $\sender{v''} = q$ and $v'\cgedgeD{SS}v''$ with $v'$ unmatched and $\receiver{v'} = r$. Then it is easy to see that $v''\cgedgeD{SS}v$  and  we can conclude $p \in \Crxpprime{R}{r}{\msc \cdot msc(w)}$;

\item or  $a_n = !? \amessage ^{q \to p'}$ with $q \in C''^{(n+1)}_{S,r}= \Crxpprime{S}{r}{\msc \cdot msc(w)}$ and $p \in C^{(n)}_{R,p'} = \Crxpprime{R}{p'}{\msc \cdot msc(a_0 \dots a_{n-1})}$:

Then there exists a message $\amessage'' $ in $\msc \cdot msc(w)$,  such that $\sender{v''} = q$ and $v'\cgedgeD{SS}v''$ with $v'$ unmatched and $\receiver{v'} = r$. Similarly there is $\amessage''' $ in $\msc \cdot msc(a_0 \dots a_{n-1})$ such that $\receiver{v'''} = p$ and $v^{iv}\cgedgeD{SS}v'''$ with $v^{iv}$ unmatched and $\receiver{v^{iv}} = p'$. Now when adding $v$ to the conflict graph we have  $v\cgedgeD{SS}v^{iv}$. Hence we can conclude $p \in \Crxpprime{R}{r}{\msc \cdot msc(w)}$.
\end{itemize}

Now suppose that $p \in \Crxpprime{R}{r}{\msc \cdot msc(w)}$ (without loss of generality we can assume $p \notin \Crxpprime{R}{r}{\msc \cdot msc(a_0 \dots a_{n-1})}$). We know $\Crxpprime{R}{r}{\msc \cdot msc(a_0 \dots a_{n-1})} = C^{(n)}_{R,r}$ and let $a_n= !? \amessage ^{q \to p}$.
The following can happen:
$q \in  \Crxpprime{S}{r}{\msc \cdot msc(a_0 \dots a_{n-1})},$
or $q \in  \Crxpprime{R}{r}{\msc \cdot msc(a_0 \dots a_{n-1})}$,
 or $p \in  \Crxpprime{S}{r}{\msc \cdot msc(a_0 \dots a_{n-1})}$.
The last case is when  $a_n= !? \amessage ^{q' \to p'}$ and $p \in   \Crxpprime{R}{p'}{\msc \cdot msc(a_0 \dots a_{n-1})}$ and $q' \in \Crxpprime{S}{r}{\msc \cdot msc(a_0 \dots a_{n-1})}$.
For all this cases,  by inductive hypothesis and by Definition \ref{def:feasibleuto} we can conclude $p \in C^{n+1}_{R,r}$.\qed

%
%

\end{proof}

\subsection{Language of reachable exchanges}

The only thing that remains to do is to combine the previous automata 
to define one that recognizes the (codings of) reachable exchanges.
The language $\LangSRauto$ contains
arbitrary exchanges which do not necessarily satisfy
causal delivery.
Here comes into play the $\anauto(\Br,\Br')$ automata, 
where we take $\Br$ and $\Br' \in \absBrSetGood$ in
order to ensure causal delivery.

Let
$$ 
	\feaslanguage(\globalstateli, \globalstatelf, \Br, \Br' )
	\eqdef 
	\bigcup_{\globalstatelj \in L_{\system} } \LangSRauto \cap 
	\mathcal{L}(\Br, \Br').
$$
Intuitively, 
$\feaslanguage(\globalstateli, \globalstatelf, \Br, \Br' )$ 
is the language of (codings of) exchanges between 
global states $\globalstateli$ and  $\globalstatelf$ 
starting with an initial buffer state $\Br$ and ending 
in final buffer state $\Br'$; when moreover  
$\Br, \Br' \in \absBrSetGood$, these exchanges satisfy
causal delivery.

The last step is to combine causal delivery exchanges so 
that they can be performed by the system one after the other 
from the initial state $\globalstate{l_0}$. 
This motivates the definition
of the following set $\mathcal R$ of \emph{reachable languages}.
Let $\Br_\emptyset = (\emptyset, \emptyset)_{p \in \procSet}$.

\begin{definition}[Reachable languages]
Given a system 
$\system=(L_\system, \delta_\system, \globalstate{l_0})$, 
the set $\mathcal{R}$ of reachable languages is 
the least set of languages of the form \\
$\feaslanguage(\globalstateli, \globalstatelf,\Br_i, \Br_f)$
defined as follows.
\begin{enumerate}
	\item for any $\globalstate{l} \in L_{\system}$ and
$\Br \in \absBrSetGood $,
	$\feaslanguage(\globalstate{l_0},\globalstate{l},\Br_{\emptyset},\Br)$ is in $\mathcal{R}$
	\item for any $\globalstate{l_1},\globalstate{l_2},\globalstate{l_3}  \in L_{\system}$ and any $\Br_1,\Br_2,\Br_3\in \absBrSetGood$, if
	$\feaslanguage(\globalstate{l_1},\globalstate{l_2},\Br_1,\Br_2)\in \mathcal{R}$  and $\feaslanguage(\globalstate{l_1},\globalstate{l_2},\Br_1,\Br_2)\neq\emptyset$ then $\feaslanguage(\globalstate{l_2},\globalstate{l_3},\Br_2,\Br_3)\in \mathcal{R}$.
\end{enumerate}

\end{definition}

Then the union $\bigcup \mathcal R$ of all reachable languages
is equal to the language 
$\reachlanguage=\{w\in\Sigma^*\mid msc(w)\mbox{ is reachable}\}$. 
As a consequence, we get the following result.

\begin{restatable}{theorem}{thmreachable}\label{th:reachable}
	$\reachlanguage$ is a regular language and is accepted
	by an effective finite state automaton.
\end{restatable}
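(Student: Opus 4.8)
The plan is to prove the equality $\bigcup\mathcal{R}=\reachlanguage$ announced just above, from which regularity and effectiveness follow by a finiteness argument. For the finiteness part, I would first observe that each building block $\feaslanguage(\globalstateli,\globalstatelf,\Br,\Br')$ is regular and effectively computable: by definition it is a finite union (over the finitely many middle states $\globalstatelj\in L_{\system}$) of intersections $\LangSRauto\cap\mathcal{L}(\Br,\Br')$, each factor being the language of one of the explicitly constructed finite automata $\AutSRauto$ and $\cdauto{\Br}{\Br'}$. Since $\globalstateli,\globalstatelf$ range over the finite set $L_{\system}$ and $\Br,\Br'$ over the finite set $\absBrSet$, there are only finitely many distinct languages of the form $\feaslanguage(\cdot,\cdot,\cdot,\cdot)$; hence $\mathcal{R}$ is finite and can be computed by saturation, starting from the base languages of rule (1) and repeatedly applying rule (2), using decidability of emptiness for regular languages to test the side condition. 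This terminates, so $\reachlanguage=\bigcup\mathcal{R}$ is a finite union of effective regular languages and therefore itself regular with an effective automaton.

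It remains to establish $\bigcup\mathcal{R}=\reachlanguage$. For the inclusion $\bigcup\mathcal{R}\subseteq\reachlanguage$, I would prove by induction on the construction of $\mathcal{R}$ the invariant that whenever $\feaslanguage(\globalstateli,\globalstatelf,\Br_i,\Br_f)\in\mathcal{R}$, the pair $(\globalstateli,\Br_i)$ is \emph{reachable by a valid prefix}: there exist exchanges $\msc_1,\dots,\msc_n$ with $\msc_1\cdots\msc_n\in\asTr(\system)$, $\globalstate{l_0}\trMSC{\msc_1\cdots\msc_n}\globalstateli$, and $\absBr(\msc_1\cdots\msc_n)=\Br_i$. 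The base case uses the empty prefix ($n=0$, with $\Br_\emptyset$ the buffer state of the empty MSC). For the inductive case I exploit the nonemptiness side condition: pick $u\in\feaslanguage(\globalstate{l_1},\globalstate{l_2},\Br_1,\Br_2)$; by Lemma~\ref{lemma:aut_SR_correct} $msc(u)$ is an exchange from $\globalstate{l_1}$ to $\globalstate{l_2}$, and by Lemma~\ref{lemma:language_B} appending it updates the buffer state from $\Br_1$ to $\Br_2$, so extending the witnessing prefix of $(\globalstate{l_1},\Br_1)$ by $msc(u)$ gives a valid prefix of $(\globalstate{l_2},\Br_2)$ (it stays in $\asTr(\system)$ and satisfies causal delivery since $\Br_1,\Br_2\in\absBrSetGood$, by Proposition~\ref{prop:causaldeliv}). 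Granting the invariant, any $w\in\feaslanguage(\globalstateli,\globalstatelf,\Br_i,\Br_f)\in\mathcal{R}$ is appended to the witnessing prefix, and Lemmas~\ref{lemma:aut_SR_correct} and~\ref{lemma:language_B} give $\globalstate{l_0}\trMSC{\msc_1\cdots\msc_n\cdot msc(w)}\globalstatelf$ with final buffer $\Br_f\in\absBrSetGood$, so the whole concatenation is a trace of $\system$ and $msc(w)$ is reachable.

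For the converse $\reachlanguage\subseteq\bigcup\mathcal{R}$, I take $w$ with $msc(w)$ reachable, witnessed by exchanges $\msc_1,\dots,\msc_n$ with $\msc_1\cdots\msc_n\cdot msc(w)\in\asTr(\system)$, and use Lemma~\ref{lem:surjectivite-encodage} to write $\msc_j=msc(w_j)$. Letting $\globalstate{r_i}$ be the global state after $\msc_1\cdots\msc_i$ and $\Br^{(i)}=\absBr(\msc_1\cdots\msc_i)$, with $\globalstate{r_0}=\globalstate{l_0}$ and $\Br^{(0)}=\Br_\emptyset$, I would walk the chain $\feaslanguage(\globalstate{r_{i-1}},\globalstate{r_i},\Br^{(i-1)},\Br^{(i)})$ through $\mathcal{R}$: rule (1) inserts the first language, and each language is nonempty (witnessed by $w_i$, again via Lemmas~\ref{lemma:aut_SR_correct} and~\ref{lemma:language_B}), so rule (2) inserts the next. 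The last language contains $w$, whence $w\in\bigcup\mathcal{R}$.

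The main obstacle is ensuring that every $\Br^{(i)}$ lies in $\absBrSetGood$, which is needed both for the inductive invariant and for legitimately placing each $\feaslanguage(\globalstate{r_{i-1}},\globalstate{r_i},\Br^{(i-1)},\Br^{(i)})$ in $\mathcal{R}$. By Proposition~\ref{prop:causaldeliv} this amounts to showing that every prefix $\msc_1\cdots\msc_i$ of a causal-delivery MSC again satisfies causal delivery. I would argue this at the level of extended conflict graphs: since concatenation preserves the matching relation $\prec_{src}$ and every deduction rule of Fig.~\ref{fig:conflict-graph-extension} is monotone under concatenation, $\ecgraph{\msc_1\cdots\msc_i}$ embeds as the subgraph of $\ecgraph{\msc_1\cdots\msc_n\cdot msc(w)}$ on the send events of the prefix; acyclicity of the larger graph then descends to the prefix, and the characterization of causal delivery by acyclicity of the extended conflict graph yields $\Br^{(i)}\in\absBrSetGood$.
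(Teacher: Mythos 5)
Your proposal is correct and follows essentially the same route as the paper: both reduce the statement to the equality $\bigcup\mathcal{R}=\reachlanguage$, prove the forward inclusion by chaining Lemmas~\ref{lemma:aut_SR_correct} and~\ref{lemma:language_B} along the sequence of nonempty reachable languages, and prove the converse by encoding the witnessing exchanges via Lemma~\ref{lem:surjectivite-encodage} and walking the resulting chain through the two closure rules. If anything, you are more explicit than the paper on two points it leaves implicit, namely the saturation computation of the finite set $\mathcal{R}$ and the fact that prefixes of a causal-delivery MSC again satisfy causal delivery (so that the intermediate buffer states lie in $\absBrSetGood$).
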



\begin{proof}
	$\Rightarrow$ $w \in \bigcup\mathcal{R}$ so there is
	a sequence of words $w_1, \cdots, w_n \in \Sigma^*$ such that $\forall 1 \leq
	j \leq n, w_j \in \feaslanguage(\globalstate{l_{j-1}}, \globalstate{l_j},
	\Br_{j-1}, \Br_j) \neq \emptyset$ with $\Br_{j} \in \absBrSetGood$, and there is also
	$\globalstatelf \in L_\system, \Br_f \in \absBrSetGood$ such that  $w \in
	\feaslanguage(\globalstate{l_n}, \globalstatelf, \Br_n, \Br_f)$.

	So, $w_j \in
	\LangSR{\globalstate{l_{j-1}}}{\globalstate{l'_j}}{\globalstate{l_j}}$ for
	$\globalstate{l'_j} \in L_\system$, and, by Lemma~\ref{lemma:aut_SR_correct},
	$\globalstate{l_{j-1}} \trMSC{msc(w_j)} \globalstate{l_j}$, and
	$\globalstate{l_n} \trMSC{msc(w)} \globalstatelf$.

	Moreover, for all $1 \leq j \leq n$, $w_j \in \mathcal{L}(\Br_{j-1}, \Br_j)$ and $\Br_j \in \absBrSetGood$.
	So,  each $msc(w_j)$ verifies causal delivery, and,  we can easily show by induction that  $msc(w_1) \cdots msc(w_n) \cdot msc(w)$ verifies causal delivery too. Finally, by Proposition~\ref{proposition:msc_trace_state}, $msc(w_1) \cdots msc(w_n) \cdot msc(w) \in asTr(\system)$ and so, $msc(w)$ is reachable.

$\Leftarrow$
 $msc(w)$ is reachable so there is $\mu_1 \cdots \mu_n$ a sequence of MSCs such that $
\mu_1 \cdots \mu_n \cdot msc(w) \in asTr(\system)$.
Suppose that $\mu_1 \cdots \mu_n = \varepsilon$, then, $\globalstate{l_0}
\trMSC{msc(w)} \globalstatelf$ for some $\globalstatelf$. Therefore, $w \in
\LangSR{\globalstate{l_0}}{\globalstate{l}}{\globalstatelf}$ for some
$\globalstate{l} \in L_\system$. As $msc(w) \in asTr(\system)$, $msc(w)$ verifies
causal delivery and so there is $\Br$ such that $w \in
\mathcal{L}(\Br_0,\Br')$ with $\Br' \in \absBrSetGood$.
Finally, we have that $w \in \feaslanguage(\globalstate{l_0}, \globalstatelf, \Br_0, \Br)$ and so $w \in \bigcup\mathcal{R}$.

Now, suppose that  $\mu_1 \cdots \mu_n \neq \varepsilon$. Then, there is a sequence  $w_1, \cdots, w_n \in \Sigma^*$ such that $msc(w_i) = \mu_i$, $1\leq i\leq n$, and we can suppose that $w_1, \cdots, w_n \in \reachlanguage$.
By Lemma~\ref{lemma:language_B}, there is $\Br = \absBr(\mu_1 \cdots \mu_n), \Br' =  \absBr(\mu_1 \cdots \mu_n \cdot msc(w)) \in \absBrSetGood$ such that $w \in \mathcal{L}(\Br, \Br')$.
Moreover, there is $\globalstateli, \globalstatelf$ such that $\globalstateli \trMSC{msc(w)} \globalstatelf$, so $w \in \LangSR{\globalstateli}{\globalstatelj}{\globalstatelf}$ for some $\globalstatelj$.
Finally, we have $w \in \feaslanguage(\globalstateli, \globalstatelf, \Br, \Br')$ and then $w \in \reachlanguage$.\qed
\end{proof}

%
%
%

\section{Prime exchanges}\label{sec:prime}


We reformulate the primality of an exchange in terms of its conflict graph. 
We say that
the conflict graph $\cgraph{\msc}$ associated with the MSC $\msc$ is strongly connected if
for all $v,v'\in V$ it holds that $v\to^* v'$, where $\to^*$ is the reflexive transitive closure of
$\to=\bigcup_{X,Y\in\{S,R\}}\cgedge{XY}$.

\begin{restatable}{lemma}{lemmaprimestronglyconnected}\label{lem:prime-strongly-connected}
An exchange $\msc$ is prime iff $\cgraph{\msc}$ is strongly connected.
\end{restatable}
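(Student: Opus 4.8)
The plan is to prove both implications through a single combinatorial bridge: a non-trivial factorization $\msc = \msc_1 \cdot \msc_2$ of the exchange into two non-empty exchanges corresponds to a partition of the vertex set $V = \lambda^{-1}(S)$ of $\cgraph{\msc}$ into two non-empty parts $V_1, V_2$ such that no edge of $\cgraph{\msc}$ is directed from $V_2$ to $V_1$. A factorization induces such a partition, and conversely such a partition can be turned back into a factorization. Once this bridge is in place, primality (no factorization) matches strong connectivity via the standard graph fact: a finite digraph is strongly connected iff it admits no partition into non-empty $V_1, V_2$ with all cross-edges directed from $V_1$ to $V_2$ (if some $u$ cannot reach some $w$, take $V_2$ to be the set of vertices reachable from $u$, which is non-empty and proper and has no outgoing edge to its complement).

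For the direction ``$\cgraph{\msc}$ strongly connected $\Rightarrow$ $\msc$ prime'' I argue contrapositively. From a factorization $\msc = \msc_1 \cdot \msc_2$ into two non-empty exchanges, let $V_1, V_2$ be the send events of $\msc_1, \msc_2$; both are non-empty because a non-empty exchange contains at least one send (every receive of an exchange is matched). Since $\prec_{src}$ of a concatenation is the disjoint union of the two source relations, for any $v \in V_2$ both $ev_S(v)$ and, when defined, $ev_R(v)$ lie in $\msc_2$, and symmetrically for $V_1$. An edge $v \cgedge{XY} v'$ with $v \in V_2$ and $v' \in V_1$ would give $ev_X(v) \prec_{po} ev_Y(v')$ with $ev_X(v)$ an $\msc_2$-event and $ev_Y(v')$ an $\msc_1$-event on the same process; but the concatenation orders all $\msc_1$-events of that process before all its $\msc_2$-events, a contradiction. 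Hence no edge runs from $V_2$ to $V_1$, and by the graph fact $\cgraph{\msc}$ is not strongly connected.

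For the converse, ``$\msc$ prime $\Rightarrow$ $\cgraph{\msc}$ strongly connected'', I again argue contrapositively. Assuming $\cgraph{\msc}$ is not strongly connected, the graph fact yields non-empty $V_1, V_2$ with no edge from $V_2$ to $V_1$. I set $Ev_i = V_i \cup \{ev_R(v) : v \in V_i \text{ matched}\}$, which partitions $Ev$ since every receive is matched. The crucial step is to show that on each process timeline the $Ev_1$-events form a $\prec_{po}$-prefix and the $Ev_2$-events a suffix: otherwise some $p$-event $ev_X(v) \in Ev_2$ (with $v \in V_2$) would satisfy $ev_X(v) \prec_{po} ev_Y(v')$ for a $p$-event $ev_Y(v') \in Ev_1$ (with $v' \in V_1$), which is precisely an edge $v \cgedge{XY} v'$ from $V_2$ to $V_1$, contradicting the choice of the partition. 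With this prefix/suffix property I define $\msc_1, \msc_2$ as the restrictions of $\msc$ to $Ev_1, Ev_2$, check the MSC axioms (the $\prec_{src}$-bijection clause holds because matched pairs stay within one part), and verify $\msc = \msc_1 \cdot \msc_2$ against the concatenation definition; the only non-immediate clause is the one on $\prec_{po}$, which is exactly the prefix/suffix property. Each $\msc_i$ is an exchange because restricting any $S^*R^*$-linearization of $\msc$ to $Ev_i$ is an $S^*R^*$-linearization of $\msc_i$, so $\msc$ is not prime.

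The main obstacle I anticipate is this construction direction: turning the graph-theoretic partition back into a genuine concatenation requires verifying that the two restrictions are bona fide MSCs and that their concatenation reproduces $\prec_{po}$ exactly, and everything rests on the prefix/suffix lemma, which is the real heart of the proof. The converse implication and the purely graph-theoretic equivalence are comparatively routine.
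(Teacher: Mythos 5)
Your proposal is correct, and it is worth noting that it does strictly more than the proof printed in the paper. The paper's argument only establishes one implication: it observes that in a factorization $\msc=\msc_1\cdots\msc_n$ every conflict-graph edge goes from an earlier factor to a later one (never backwards), so a strongly connected $\cgraph{\msc}$ forces $n=1$; the converse (prime $\Rightarrow$ strongly connected) is left to the appendix. Your first direction is exactly this monotonicity observation, phrased via the cut $(V_1,V_2)$ with no edge from $V_2$ to $V_1$. Your second direction supplies the missing half, and you correctly identify where the work is: given a non-trivial reachability-closed cut of $\cgraph{\msc}$, you must rebuild a genuine factorization, and the crux is the prefix/suffix property on each process timeline, whose failure would produce precisely a backward edge $v\cgedge{XY}v'$ from $V_2$ to $V_1$. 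The remaining verifications go through: $Ev_i=V_i\cup\{ev_R(v)\mid v\in V_i\text{ matched}\}$ partitions $Ev$ because every receive of an MSC is matched within it, each restriction satisfies the MSC axioms (matched pairs stay in one part), the concatenation clause for $\prec_{po}$ is the prefix/suffix property, and restricting an $S^*R^*$ linearization of $\msc$ to $Ev_i$ witnesses that each factor is again an exchange. So the two approaches coincide on the easy direction, and your construction of the factorization from the cut is a complete and correct account of the direction the paper's printed proof omits.
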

\begin{proof}
Let $\msc=\msc_1\cdots\msc_n$ be an MSC formed with a sequence
of exchanges. Let $e,e'$ be two events of $\msc$, and let $i,i'\in\{1,\ldots,n\}$ be such
that $e$ appears in $\msc_{i}$ and $e'$ appears in $\msc_{i'}$.
If there is an edge $e\xrightarrow{XY} e'$ in the conflict graph of $\msc$,
then $i\leq i'$. As a consequence, if $e$ and $e'$ are on a same
strongly connected component, then $i=i'$, and if the conflict graph
of $\msc$ is strongly connected, then $n=1$ and $\msc$ is a prime
exchange. \qed
\end{proof}

Next we  discuss the construction of the automaton that recognizes $\{w\in \Sigma^*\mid msc(w) \mbox{ is prime}\}$.
Since there are infinitely many $\cgraph{msc(w)}$, in order to have a finite state automaton, we compute a finite abstractions
of $\cgraph{msc(w)}$ 
that is sound in the sense
that $\cgraph{msc(w)}$ is strongly connected if and only if its abstraction is of a certain shape.
Let us now define this abstraction.

We need to define some graph transformations. The graphs we are going to manipulate are
oriented graphs labeled with a pair of set of processes on each vertex. We call such objects P-graphs.
Formally, a P-graph is a tuple $(V,E,\lambda_S,\lambda_R)$ with $E\subseteq V\times V$ and
$\lambda_X:V\to 2^{\procSet}$ for $X \in \{S,R\} $. The P-graph $\pgraphof{\msc}$ associated with the conflict graph
$\cgraph{\msc}=(V,\{\cgedge{XY}\}_{X,Y\in\{S,R\}})$ is $(V,E,\lambda_S,\lambda_R)$ where
(1) $(v,v')\in E$ if $v\cgedge{XY}v'$ for some $X,Y$, (2) $\lambda_S(v)=\{\sender{v}\}$, and
(3) if $v$ is matched, then $\lambda_R(v)=\{\receiver{v}\}$, and if $v$ is unmatched
$\lambda_R(v)=\emptyset$.

The first graph transformation we consider consists in merging the vertices that belong to a same strongly
connected component (SCC). Formally, let $G=(V,E,\lambda_S,\lambda_R)$ be a P-graph, and let
$\mergeof{G}=(V',E',\lambda_S,\lambda_R)$ be defined by (1) $V'$ is the set of maximal SCCs of $G$,
(2) for two distinct maximal SCCs $U,U'$, $(U,U')\in E'$ if there are $v\in U$ and $v'\in U'$ such that
$(v,v')\in E^+$ (the transitive closure of $E$), (3) for $X=S,R$, $\lambda_X(U)=\bigcup_{v\in U}\lambda_X(v)$.

The second graph transformation we consider consists in erasing some of the processes that appear in the labels.
Let $G=(V,E,\lambda_S,\lambda_R)$ be a fixed P-graph, and let $v\in V$, $X\in\{S,R\}$, and $p\in\lambda_X(v)$
be fixed. We say that $p$ is X-redundant in $v$ if there are $v_1,v_2$ such that (1) $(v_1,v)\in E^+$ and
$(v,v_2)\in E^+$, and (2) $p\in\lambda_X(v_1)\cap\lambda_X(v_2)$. Intuitively, $p$ is redundant in $v$ if it also
appears in the label of an ancestor and a descendant of $v$. We define the P-graph $\eraseof{G}$ as $(V,E,\lambda_S',\lambda_R')$ where for all $X\in\{S,R\}$, for all $v\in V$, $\lambda_X'(v)$ is the set of processes
$p\in\lambda_X(v)$ such that $p$ is not X-redundant at $v$.

The last graph transformation we consider consists in sweeping out the vertices labeled with empty sets of
processes. Formally, for $G=(V,E,\lambda_S,\lambda_R)$, the P-graph $\sweepof{G}$ is $(V',E',\lambda_S,\lambda_R)$
where $V'=\{v\in V\mid \lambda_S(v)\cup\lambda_R(v)\neq\emptyset\}$ and $E'=E\cap V'\times V'$. The abstraction
$\alpha(G)$ of a P-graph $G$ is defined as $\sweepof{\eraseof{\mergeof{G}}}$.
An example of the construction is in Fig~\ref{fig:example-abstraction}.

\begin{figure}[t]
\begin{tikzpicture}
\begin{scope}[xshift=0, scale = 0.9]

  \coordinate(pa) at (0,-0.25) ;
  \coordinate (pb) at (0,-3.75) ;
  \coordinate (qa) at (1,-0.25) ;
  \coordinate (qb) at (1,-3.75) ;
  \coordinate (ra) at (2,-0.25) ;
  \coordinate (rb) at (2,-3.75) ;
  \draw (0,0) node{$p$} ;
\draw (1,0) node{$q$} ;
\draw (2,0) node{$r$} ;
  \draw (pa) -- (pb) ;
  \draw (qa) -- (qb) ;
  \draw (ra) -- (rb) ;

  \draw[>=latex,->] (1,-1) -- node [above,sloped, pos = 0.3] {$\amessage_1$} (2,-1);
  \draw[>=latex,->] (1,-1.5) -- node [above,sloped, pos = 0.3] {$\amessage_2$} (2,-1.5);
  \draw[>=latex,->] (1,-2) -- node [above,sloped, pos = 0.3] {$\amessage_3$} (2,-2);
  \draw[>=latex,->] (1,-2.5) -- node [above,sloped, pos = 0.2] {$\amessage_4$} (0,-3);
  \draw[>=latex,->] (0,-2.5) -- node [above,sloped, pos = 0.2] {$\amessage_5$} (1,-3);
  \draw[>=latex,->, dashed] (2,-.5) -- node [above,sloped, pos = 0.6] {$\amessage_6$} (1,-3.5);

\end{scope}
\begin{scope}[xshift=4cm, scale = 0.8]
\node at (1,0.5) {$\lambda_S$};\node at (2,0.5) {$\lambda_R$};
\node[circle,draw, text width=.2cm] (1) at (0,0) {$\hspace*{-0.05cm}\amessage_1$}; \node at (1,0) {$\{q\}$}; \node at (2,0) {$\{r\}$};
\node[circle,draw, text width=.2cm] (2) at (0,-1) {$\hspace*{-0.05cm}\amessage_2$}; \node at (1,-1) {$\{q\}$}; \node at (2,-1) {$\{r\}$};
\node[circle,draw, text width=.2cm] (3) at (0,-2) {$\hspace*{-0.05cm}\amessage_3$}; \node at (1,-2) {$\{q\}$}; \node at (2,-2) {$\{r\}$};
\node[circle,draw, text width=.2cm] (4) at (0,-3) {$\hspace*{-0.05cm}\amessage_4$}; \node at (1,-3) {$\{q\}$}; \node at (2,-3) {$\{p\}$};
\node[circle,draw, text width=.2cm] (5) at (0,-4) {$\hspace*{-0.05cm}\amessage_5$}; \node at (1,-4) {$\{p\}$}; \node at (2,-4) {$\{q\}$};
\draw[->] (1) -- (2);
\draw[->] (2) -- (3);
\draw[->] (3) -- (4);
\draw[<->] (4) -- (5);
\end{scope}
\begin{scope}[xshift=8cm, scale=0.8]
\node at (1.5,0.5) {$\lambda_S$};
\node at (3,0.5) {$\lambda_R$};
\node[circle,draw, text width=.75cm] (n0) at (0,-.3) {$\hspace*{0.07cm}\{\amessage_1\}$};
\node at (1.5,-.3) {$\{q\}$};
\node at (3,-.3) {$\{r\}$};
\node[circle,draw, text width=.75cm] (n1) at (0,-2) {$\hspace*{0.07cm}\{\amessage_3\}$};
\node at (1.5,-2) {$\emptyset$};
\node at (3,-2) {$\{r\}$};
\node[circle,draw, text width=.75cm] (n2) at (0,-3.7) {$\hspace*{-0.17cm}\{\amessage_4,\amessage_5\}$};
\node at (1.5,-3.7) {$\{p,q\}$};
\node at (3,-3.7) {$\{p,q\}$};
\draw[->] (n0) -- (n1); \draw[->] (n1) -- (n2);
\end{scope}
\end{tikzpicture}
\caption{\label{fig:example-abstraction}MSC $\msc_5$, 
its associated P-graph $\pgraphof{\msc_5}$, and the abstraction $\alpha(\pgraphof{\msc_5})$.}
\end{figure}

\begin{restatable}{lemma}{lemmasciffsinglevertex}\label{lem:sc-if-single-vertex}
$\cgraph{\msc}$ is strongly connected iff $\alpha(\pgraphof{\msc})$ is a single vertex graph.
\end{restatable}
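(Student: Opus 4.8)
The plan is to relate the strong connectivity of $\cgraph{\msc}$ to the shape of the intermediate graph $\mergeof{\pgraphof{\msc}}$, and then to check that the remaining two transformations (erasing redundant labels and sweeping empty-label vertices) neither create nor destroy the single-vertex property. First I would observe that the underlying digraph $(V,E)$ of $\pgraphof{\msc}$ carries exactly the reachability of $\cgraph{\msc}$: by construction $(v,v')\in E$ iff $v\cgedge{XY}v'$ for some $X,Y$, that is iff $v\to v'$. Hence $\cgraph{\msc}$ is strongly connected iff $(V,E)$ is a single strongly connected component, which by definition of the merge transformation is equivalent to $\mergeof{\pgraphof{\msc}}$ having a single vertex. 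This already settles the forward direction: if $\mergeof{\pgraphof{\msc}}$ is a single vertex $U$, then $\lambda_S(U)=\{\sender{v}\mid v\in U\}$ is nonempty (each send event contributes a sender), and since $U$ has no other vertex as an ancestor or a descendant, no process is $X$-redundant at $U$. Thus erasing leaves the labels untouched and sweeping keeps $U$, so $\alpha(\pgraphof{\msc})=\sweepof{\eraseof{\mergeof{\pgraphof{\msc}}}}$ is again a single vertex.

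For the converse I would argue by contrapositive, assuming $\cgraph{\msc}$ is not strongly connected and showing that $\alpha(\pgraphof{\msc})$ has at least two vertices. Writing $G'\eqdef\mergeof{\pgraphof{\msc}}$, the graph $G'$ is a directed acyclic graph (its vertices are the maximal SCCs and only edges between distinct SCCs are kept) with at least two vertices, and every vertex still carries a nonempty $\lambda_S$-label. The key observation is that the erase transformation cannot empty the label of a source or of a sink of $G'$: a process $p$ is declared $X$-redundant at $v$ only when it occurs both at a strict ancestor and at a strict descendant of $v$ (i.e.\ there are $v_1,v_2$ with $(v_1,v),(v,v_2)\in E^+$), which is impossible when $v$ has no incoming (resp.\ no outgoing) edge. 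Hence every source and every sink of $G'$ keeps its nonempty $\lambda_S$-label through $\eraseof{G'}$ and is therefore preserved by the sweep step.

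It then remains to exhibit two distinct surviving vertices. In a finite DAG with at least two vertices there are always a source $s$ and a sink $t$ with $s\neq t$: if the graph has no edge, all vertices are simultaneously sources and sinks and at least two of them survive; otherwise, picking an edge $a\to b$ and following edges backward from $a$ to a source $s$ and forward from $b$ to a sink $t$ yields a path $s\to^* a\to b\to^* t$ of length at least one, so acyclicity forces $s\neq t$. In either case at least two vertices of $G'$ survive erasing and sweeping, so $\alpha(\pgraphof{\msc})$ is not a single vertex, as required.

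I expect the main obstacle to be precisely this converse direction. Since the merge step alone already characterizes strong connectivity, the real content is ruling out that erasing and sweeping collapse a genuine multi-vertex DAG down to a single vertex. The argument above isolates the source and sink vertices as the ones that are immune to label erasure while retaining a nonempty sender label, which is exactly what prevents this collapse; making the quantifiers in the definition of $X$-redundancy line up with ``no ancestor'' / ``no descendant'' is the delicate point to get right.
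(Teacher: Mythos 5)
Your proof is correct and follows the same route as the paper's, which simply observes that the vertices of $\alpha(\pgraphof{\msc})$ correspond to strongly connected components of $\cgraph{\msc}$. In fact your write-up is more careful than the paper's one-line argument: the source/sink observation showing that $\eraseof{.}$ and $\sweepof{.}$ cannot collapse a multi-vertex DAG down to a single surviving vertex is precisely the step the paper leaves implicit, and you have identified and closed it correctly.
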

\begin{proof}
By definition of $\alpha$, and in particular
of function $\mergeof{.}$,
a vertex of $\alpha(\pgraphof{\msc})$ corresponds to
a strongly connected component of $\cgraph{\msc}$.\qed
\end{proof}

By construction, for any process $p$, and for any $X\in\{S,R\}$,
there are at most two vertices $v$ of $\alpha(\pgraphof{\msc})$ such that
$p\in\lambda_X(v)$. From this, we deduce that $\alpha(\pgraphof{\msc})$ has at most
$2|\procSet|$ vertices, and as a consequence:
\begin{restatable}{lemma}{lemfiniteabstraction}\label{lem:finite-abstraction}
$\sharp \{\alpha(\pgraphof{\msc})\mid\msc\mbox{ is an exchange}\}\leq 2^{6|\procSet|^2}$.
\end{restatable}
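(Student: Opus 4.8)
The plan is to turn the structural bound on the number of vertices into a plain counting argument. The paragraph preceding the lemma already records the two facts I will rely on: for every process $p$ and every $X\in\{S,R\}$ at most two vertices of $\alpha(\pgraphof{\msc})$ carry $p$ in their $X$-label, and consequently $\alpha(\pgraphof{\msc})$ has at most $2|\procSet|$ vertices. So it suffices to bound the number of P-graphs, up to renaming of vertices, that have at most $2|\procSet|$ vertices, labels drawn from $2^{\procSet}\times 2^{\procSet}$, and an edge relation of the shape produced by $\alpha$.

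First I would pin down the shape of the edge relation. Since $\mergeof{\cdot}$ collapses each strongly connected component to a single vertex, its output is the condensation of $\cgraph{\msc}$, hence acyclic; the later operations $\eraseof{\cdot}$ and $\sweepof{\cdot}$ only delete processes from labels and delete vertices together with their incident edges, so neither creates a cycle. Therefore the edge relation of $\alpha(\pgraphof{\msc})$ is acyclic. This is the crucial point for the constant: fixing any topological numbering of the $m\le 2|\procSet|$ vertices, every edge is forward, so the edge set is a subset of the $\binom{m}{2}\le\binom{2|\procSet|}{2}\le 2|\procSet|^2$ forward pairs, and there are at most $2^{2|\procSet|^2}$ such relations.

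Next I would count the labelings. Each of the $\le 2|\procSet|$ vertices carries a pair $(\lambda_S(v),\lambda_R(v))\in 2^{\procSet}\times 2^{\procSet}$, i.e. one of $2^{2|\procSet|}$ possibilities, so there are at most $(2^{2|\procSet|})^{2|\procSet|}=2^{4|\procSet|^2}$ labelings. Every isomorphism class of abstraction is represented by at least one topologically numbered (hence forward-edge) structure on the canonical vertex set $\{1,\dots,m\}$, so multiplying the two counts gives
$$
\sharp\{\alpha(\pgraphof{\msc})\mid \msc\mbox{ an exchange}\}\ \le\ 2^{2|\procSet|^2}\cdot 2^{4|\procSet|^2}\ =\ 2^{6|\procSet|^2},
$$
which is the claim.

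The only real obstacle is the bookkeeping needed to obtain the constant $6$ rather than a larger one. The naive count of edges as an arbitrary subset of ordered vertex pairs yields $2^{(2|\procSet|)^2}=2^{4|\procSet|^2}$ and thus the weaker bound $2^{8|\procSet|^2}$; the improvement comes entirely from exploiting that $\alpha$ outputs an acyclic graph, so that after a topological numbering the edges live among only $\binom{2|\procSet|}{2}\le 2|\procSet|^2$ forward pairs. I would also be careful that the estimate is genuinely up to isomorphism: it is enough that each abstraction is isomorphic to some structure on $\{1,\dots,m\}$, so that any over-counting is harmless for an upper bound.
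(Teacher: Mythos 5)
Your proof is correct and it reaches the stated bound, but the step that actually delivers the constant $6$ is different from the paper's. The paper counts P-graphs with $n$ vertices as (arbitrary edge relation: $2^{n(n-1)}$ choices) times (labelings: $n^{4|\procSet|}$ choices, exploiting that each pair $(p,X)$ marks at most two vertices) and then sums over $n\leq 2|\procSet|$. You instead observe that $\alpha(\pgraphof{\msc})$ is acyclic --- since $\mergeof{\cdot}$ yields (the transitive closure of) the condensation of $\cgraph{\msc}$, and $\eraseof{\cdot}$, $\sweepof{\cdot}$ cannot create cycles --- so that after a topological numbering the edge relation is a subset of the $\binom{2|\procSet|}{2}\leq 2|\procSet|^2$ forward pairs, giving only $2^{2|\procSet|^2}$ edge relations, while you pay a cruder $2^{4|\procSet|^2}$ for the labelings. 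This is a genuine improvement rather than a cosmetic variation: with the naive $2^{n(n-1)}$ edge count the product is of order $2^{8|\procSet|^2}$ rather than $2^{6|\procSet|^2}$, and the paper's final inequality (which replaces $2^{n^2}$ by $2^{|\procSet|^2}$ even though $n$ ranges up to $2|\procSet|$) appears to contain a slip at exactly this point; your acyclicity observation is what makes the exponent $6$ come out honestly. Two small matters of bookkeeping: first, you multiply the worst-case edge count by the worst-case labeling count without summing over the number $m$ of vertices, but this is harmless because $\binom{2|\procSet|}{2}=2|\procSet|^2-|\procSet|$ leaves a slack factor $2^{|\procSet|}$ that absorbs the at most $2|\procSet|+1$ choices of $m$ (alternatively, pad each abstraction injectively with isolated empty-labelled vertices up to exactly $2|\procSet|$ vertices); second, like the paper you take the bound of $2|\procSet|$ on the number of vertices from the preceding paragraph as given --- it ultimately rests on the fact that the vertices carrying a fixed $p$ in a fixed $\lambda_X$ form a chain in the condensation, so erasure keeps at most its two endpoints --- and your argument inherits whatever care that claim requires.
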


\begin{proof}
Let $n\geq 0$ be fixed and let us give an upper bound on the
number of P-graphs with $n$ vertices. First, there are $2^{n(n-1)}$ different
possible choices for the edge relation.
By construction (in particular, by definition of function $\eraseof{.}$),
it holds that
$$
(1)\qquad
\forall p\in\procSet, \forall X\in\{S,R\},
\sharp\{v\mid p\in\lambda_X(v)\}\leq 2.
$$
A choice for the
$\lambda$ function is therefore the choice, for each $p$, of at most two
vertices $v$ such that $p\in\lambda_S(v)$ and at most two other vertices
$v$ such that $p\in\lambda_R(v)$. So there are at most $n^4$ different
choices for each $p$, and at most $n^{4|\procSet|}$ different
choices for $\lambda$. To sum up, there are less than
$2^{n^2}n^{4|\procSet|}$ P-graphs with $n$ vertices.

Now, again from (1), there are at most $2|\procSet|$ vertices in a P-graph,
so the number of P-graph is bounded by
$$
\sum_{n=1}^{2|\procSet|}2^{n^2}n^{4|\procSet|}\leq
2|\procSet|2^{|\procSet|^2}(2|\procSet|)^{4|\procSet|}
\leq 2^{|\procSet|^2}2^{|\procSet|^2}(2^{|\procSet|})^{4|\procSet|}=2^{6|\procSet|^2}.
$$
\qed
\end{proof}

\vspace{1cm}
There are therefore finitely many $\alpha(\pgraphof{\msc})$.
This allows us to define
the automaton that computes $\alpha(\pgraphof{msc(w)})$ for any $w\in \Sigma^*$
and
 accepts $w$ in the language of this new automaton
when this P-graph is a single vertex graph.
Let $G=(V,E,\lambda_S,\lambda_R)$ and
a letter $\dag\amessage^{p\to q}\in\Sigma$ be fixed.
We want to define the transition function $\delta_g$ of our
automaton, or in other words, the P-graph
$\delta_g(G,\dag\amessage^{p\to q})$
reached after adding the message
$\dag\amessage^{p\to q}$ to the MSC.
%
We let $\delta_g(G,\dag\amessage_0^{p\to q})=\alpha(G')$.
$G'=(V',E',\lambda_S',\lambda_R')$ is defined as follows: (1) $V'=V\uplus\{v_0\}$, (2) $\lambda_S'(v_0)=\{p\}$,
(3) if $\dag=!?$, then $\lambda_R'(v_0)=\{q\}$, and if $\dag=!$, then $\lambda_R'(v_0)=\emptyset$, (4) for all
$v\in V$, for all $X\in\{S,R\}$, $\lambda_X'(v)=\lambda_X(v)$, and (5) the set of edges $E'$ is defined as

\noindent
\begin{minipage}[c]{8.5cm}
  $
  \begin{array}{lll}
  \quad E'&=&E\cup\{(v,v_0)\mid p\in\lambda_S(v)\}
  \cup\{(v_0,v)\mid p\in\lambda_R(v)\}\\
  &\cup&\left\{\begin{array}{ll}
  \{(v,v_0)\mid q\in\lambda_S(v)\cup\lambda_R(v)\} & \mbox{if }\dag=!?\\
  \emptyset & \mbox{if }\dag=!
  \end{array}\right.
  \end{array}
  $
  For example, consider the MSC $\msc$ of
  Fig.~\ref{fig:example-abstraction}
  and let $G=\alpha(\pgraphof{\msc})$ be its associated
  abstracted P-graph. Let $G'$ be defined as above while
  reading $!?\amessage_6^{r\to q}$.
  Then $G'$ is the graph on the right, and
  $\delta_g(G,\amessage_6^{r\to q})$ is a single vertex graph.
\end{minipage}
\hspace*{0.5cm}
\begin{minipage}[c]{3cm}
  \begin{center}
  \begin{tikzpicture}[scale= 0.8, every node/.style={scale=0.7}]
  \node[circle,draw, text width =0.75cm ] (n0) at (0,0) {$\hspace*{0.05cm}\{\amessage_1\}$};
  \node[circle,draw, text width = 0.75cm] (n1) at (2.5,0) {$\hspace*{0.05cm}\{\amessage_6\}$};
  \node[circle,draw, text width = 0.75cm] (n2) at (0,-2) {$\hspace*{0.05cm}\{\amessage_3\}$};
  \node[circle,draw, text width = 0.75cm] (n3) at (2.5,-2) {$\hspace*{-0.21cm}\{\amessage_4,\amessage_5\}$};
  \draw[<->] (n0) -- (n1);
  \draw[->] (n0) -- (n2);
  \draw[->] (n1) -- (n2);
  \draw[->] (n2) -- (n3);
  \draw[->] (n3) -- (n1);
  \end{tikzpicture}
  \vspace*{-0.2cm}\captionof{figure}{Graph $G'$}
\end{center}
\end{minipage}

%

\begin{restatable}{lemma}{lemsoundnessalphaautomaton}\label{lem:soundness-alpha-automaton}
$\delta_g(\alpha(\pgraphof{msc(w)}),\dag\amessage^{p\to q})=\alpha(\pgraphof{msc(w\cdot\dag\amessage^{p\to q})})$.
%
\end{restatable}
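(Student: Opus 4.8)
The statement unfolds, by definition of $\delta_g$, to $\alpha(G')=\alpha(\pgraphof{msc(w\cdot\dag\amessage^{p\to q})})$, where $G'$ is the P-graph obtained from $\alpha(\pgraphof{msc(w)})$ by adjoining a fresh vertex $v_0$ with $\lambda_S(v_0)=\{p\}$, $\lambda_R(v_0)=\{q\}$ (resp. $\emptyset$ when $\dag=!$) and the edges listed in the construction of $\delta_g$. Write $\msc=msc(w)$ and $\msc^+=msc(w\cdot\dag\amessage^{p\to q})$, and for any P-graph $H$ carrying the data $(p,q,\dag)$ let $H\oplus v_0$ denote this same vertex-adjunction applied to $H$. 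Thus $G'=\alpha(\pgraphof{\msc})\oplus v_0$, and the goal becomes $\alpha\bigl(\alpha(\pgraphof{\msc})\oplus v_0\bigr)=\alpha(\pgraphof{\msc^+})$.

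The plan has two parts. First I would prove the purely combinatorial identity $\pgraphof{\msc^+}=\pgraphof{\msc}\oplus v_0$. Since $w\cdot\dag\amessage^{p\to q}$ is \emph{appended} to $w$, the canonical linearization of $\msc^+$ is $\sigma_1(w)\cdot\send{p}{q}{\amessage}\cdot\sigma_2(w)\cdot[\receive{q}{\amessage}]$, so the new send is the last send and the new receive (if any) is the last event overall; in particular the old events keep their relative $\prec_{po}$-order, hence the edges and labels among old vertices are untouched. It then remains to read off from Definition~\ref{def:conflict-graph}, using that $\msc$ is an exchange (all sends precede all receives on every process), exactly which edges touch $v_0$: an edge $v\to v_0$ exists iff $\sender{v}=p$, or $\dag=!?$ and ($\sender{v}=q$ or $\receiver{v}=q$); and an edge $v_0\to v$ exists iff $v$ is matched and $\receiver{v}=p$. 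These are precisely the edges of $\oplus v_0$, whence the identity.

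Second, and this is the crux, I would prove the commutation $\alpha(\alpha(H)\oplus v_0)=\alpha(H\oplus v_0)$ for every P-graph $H$; applied to $H=\pgraphof{\msc}$ together with the first part this closes the proof. I would isolate the right invariant: say two P-graphs have the same \emph{labelled reachability} when, for all $a,b\in\procSet$ and $X,Y\in\{S,R\}$, there is a path from a vertex carrying $a$ in $\lambda_X$ to a vertex carrying $b$ in $\lambda_Y$ in one iff there is in the other. The two facts to establish are (A) $\alpha(H)$ has the same labelled reachability as $H$, and (B) two P-graphs with the same labelled reachability have equal abstractions. Granting both, $\alpha(H)$ and $H$ agree on labelled reachability by (A); since the edges incident to $v_0$ depend only on which processes occur in which labels, adjoining $v_0$ preserves this agreement, so $\alpha(H)\oplus v_0$ and $H\oplus v_0$ have the same labelled reachability, and (B) yields the equality.

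For (A) the three transformations are treated in turn. Merging preserves reachability and, since $\mergeof{\cdot}$ records the transitive closure of the condensation, leaves the edge relation transitively closed; this fact is exactly what makes the sweep step harmless, because a swept empty-labelled vertex is never a query endpoint and any path through it between two retained vertices is already shadowed by a direct edge surviving the restriction $E\cap(V'\times V')$. The delicate case, and the main obstacle, is the erase step: one must show that deleting an $X$-redundant occurrence of $p$ at $v$ changes no labelled-reachability query. The witnessing ancestor still carries $p$ and reaches everything $v$ reaches, the witnessing descendant still carries $p$ and is reached by everything reaching $v$, and since the source-most and sink-most occurrences of each pair $(p,X)$ are themselves never redundant, at least one occurrence on each side always survives and every query reroutes through it. Lemma (B) is then a more routine, if laborious, verification that the SCC partition, the surviving labels, and the edges output by $\alpha$ are all reconstructible from the labelled-reachability relation alone, the only care being that a vertex may carry several processes and that a process may persist at up to two vertices.
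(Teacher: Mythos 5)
Your first step (the identity $\pgraphof{msc(w\cdot\dag\amessage^{p\to q})}=\pgraphof{msc(w)}$ with the fresh vertex $v_0$ and exactly the edges prescribed by $\delta_g$ adjoined) is the same computation the paper performs, and it is correct. Where you diverge is the second step. The paper does not pass through an abstract invariant: it introduces the notion of a \emph{partial abstraction} of a P-graph (a quotient by some, not necessarily maximal, strongly connected components, omitting only covered vertices), proves that a partial abstraction has the same $\alpha$-image, and then checks directly that $\alpha(\pgraphof{msc(w)})$ with $v_0$ adjoined is a partial abstraction of $\pgraphof{msc(w)}$ with $v_0$ adjoined. That is a concrete structural correspondence between the two graphs being compared.

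Your replacement for this---facts (A) and (B) about ``labelled reachability''---has a genuine gap: the invariant is too coarse, under either reading of ``path''. If length-zero paths count, (B) fails already on conflict graphs of exchanges: take $\msc$ consisting of one matched message from $p$ to $q$ versus $\msc'$ consisting of two such messages. Both satisfy every query $(a,X)\to(b,Y)$ with $a\in\{p\},b\in\{q\}$ appropriately placed (in $\msc$ via the single vertex itself, in $\msc'$ via the edge $v_1\to v_2$), so they have the same labelled reachability; yet $\alpha(\pgraphof{\msc})$ is a single vertex while $\alpha(\pgraphof{\msc'})$ has two (neither occurrence of $(p,S)$ or $(q,R)$ is redundant, since redundancy requires \emph{both} an ancestor and a descendant carrying the same label). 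If instead only paths of positive length count, then (A) fails: $\mergeof{\cdot}$ collapses a strongly connected component into one vertex and, by definition, keeps edges only between \emph{distinct} components, so the positive-length paths witnessing reachability between two labels carried inside one SCC disappear in $\alpha(G)$. Either way, one of your two pillars is false as stated, so the commutation $\alpha(\alpha(H)\oplus v_0)=\alpha(H\oplus v_0)$ does not follow. To repair it you would need a strictly finer invariant (distinguishing ``co-located in an SCC'' from ``strictly reaches'', and remembering multiplicity/extremality of label occurrences), at which point you are essentially reconstructing the structural correspondence the paper uses; I would recommend switching to the partial-abstraction argument rather than trying to patch (B), which you currently dismiss as routine when it is in fact the point of failure.
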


Before we prove Lemma~\ref{lem:soundness-alpha-automaton}, we need to introduce
a few notions and observations.
Let $G=(V,E,\lambda_S,\lambda_R)$ be a P-graph.
A vertex $v\in V$ is X-covered if for all $p\in\lambda_X(v)$
$p$ is X-redundant. We also say that $v\in V$ is covered if it is
both S-covered and R-covered. A partial abstraction
of $G$ is a graph $G'=(V',E',\lambda_S,\lambda_R)$ such that
\begin{itemize}
\item $V'=\{V_1,\dots,V_n\}$ where each $V_i$ is a (not necessarily maximal)
strongly connected component of $G$, all $V_i$ are disjoints, and
for all $v\in V\setminus\bigcup_{i=1}^nV_i$ $v$ is covered.
\item for all $i,j$, $(V_i,V_j)\in E'$ iff $(v,v')\in E$ for some $v\in V_i$
and some $v'\in V_j$.
\item for all $i,X$, $\lambda_X(V_i)=\bigcup_{v\in V_i}\lambda_X(v)$
\end{itemize}
Intuitively, $G'$ is a partial abstraction of $G$ if it results from a ``partial application'' of the functions $\mergeof{.}$, $\eraseof{.}$, and
$\sweepof{.}$: some vertices of a same SCC are merged, but not necessarily all,
some labels are erased, but not neces\-sarily all, and some vertices are 
sweeped, but not necessarily all. From this observation, it follows
the following: if $G'$ is a partial abstraction of $G$, then $\alpha(G')=\alpha(G)$.
\begin{proof}
Let $w\in\Sigma^*$ and $\dag\Message{\amessage}{p}{q}$ be fixed.
Let $G_1=\pgraphof{msc(w)}$ and $G_2=\pgraphof{msc(w\cdot\dag\Message{\amessage}{p}{q})}$
and let us compare $G_1$ and $G_2$. First, there is an extra vertex $v_0$
in $G_2$ that represents $\dag\Message{\amessage}{p}{q}$, with
$\lambda_S(v_0)=\{p\}$ and either $\lambda_R(v_0)=\{q\}$ (if $\dag=!?$)
or $\lambda_R(v_0)=\emptyset$ (if $\dag=?$).
Now, consider the extra edges. Obviously, these extra edges have
$v_0$ either as source or as destination. First consider the edges
with $v_0$ as destination. The send event of $v_0$ happens after
all send events of $p$, so for all $v\neq v_0$ such that $p\in\lambda_S(v)$,
$(v,v_0)\in E_2$.
In the case where $\dag=!?$, the receive event of $v_0$ also happens
after all send and receive events of $q$, so for all
$v\neq v_0$ such that $q\in\lambda_S(v)\cup\lambda_R(v)$, $(v,v_0)\in E_2$.
There are no
other incoming edges in $v_0$
Now, consider the outgoing edges of $v_0$. The send event of $v_0$ happens
before all receive events of $p$, so for all
$v\neq v_0$ such that $p\in\lambda_R(v)$, $(v_0,v)\in E_2$.
To sum up, we have:

$$
\begin{array}{lll}
E_2&=&E_1\cup\{(v,v_0)\mid p\in\lambda_S(v)\}\\
&\cup&\{(v_0,v)\mid p\in\lambda_R(v)\}\\
&\cup&\left\{\begin{array}{ll}
\{(v,v_0)\mid q\in\lambda_S(v)\cup\lambda_R(v)\} & \mbox{if }\dag=!?\\
\emptyset & \mbox{if }\dag=!
\end{array}\right.
\end{array}
$$

Observe now that the rules to add vertices and edges to go from $G_1$
to $G_2$ are exactly the same as the rules to go from $G$ to $G'$
in the definition of $\delta_g(G,\dag\Message{\amessage}{p}{q})$.
Assume that $G=\alpha(G_1)=\alpha(\pgraphof{msc(w)}$.
Then $G'$ is a partial abstraction of $G_2$. So by the discussion above,
$$\alpha(G')=\alpha(G_2).$$
Now, by definition of $\delta_g$,
$\delta_g(G,\dag\Message{\amessage}{p}{q}))=\alpha(G')$.
To sum up
$$
\delta_g(G,\dag\Message{\amessage}{p}{q})
=\alpha(G_2)=\alpha(\pgraphof{msc(w\cdot\dag\Message{\amessage}{p}{q})}).
$$
\qed
\end{proof}

\begin{restatable}{theorem}{thmprimereg}\label{thm:prime-reg}
There is an effective deterministic finite state automaton $\mathcal A$
with less than $2^{6|\procSet|^2}$ states
such that $\alanguage(\mathcal A)=\{w\in\Sigma^*\mid msc(w)\mbox{ is prime}\}$.
\end{restatable}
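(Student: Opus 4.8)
The plan is to let the transition function $\delta_g$ already constructed above drive a deterministic automaton whose states are abstracted P-graphs, and to read off primality directly from the shape of the reached state. Concretely, I would define $\mathcal A=(Q,\delta_g,q_0,F)$ where $Q=\{\alpha(\pgraphof{\msc})\mid \msc\text{ is an exchange}\}$, the initial state $q_0=\alpha(\pgraphof{msc(\varepsilon)})$ is the empty P-graph, the transition function is $\delta_g$, and $F$ is the set of single-vertex P-graphs belonging to $Q$. The function $\delta_g$ is total and deterministic by construction, as it associates a single P-graph $\alpha(G')$ to each pair $(G,\dag\amessage^{p\to q})$.

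The central step is to prove, by induction on $|w|$, that the run of $\mathcal A$ on $w$ ends in the state $\alpha(\pgraphof{msc(w)})$. The base case $w=\varepsilon$ holds by the definition of $q_0$. For the inductive step I would write $w=w'\cdot\dag\amessage^{p\to q}$; the induction hypothesis places the run on $w'$ in $\alpha(\pgraphof{msc(w')})$, and one application of $\delta_g$ combined with Lemma~\ref{lem:soundness-alpha-automaton} yields $\delta_g(\alpha(\pgraphof{msc(w')}),\dag\amessage^{p\to q})=\alpha(\pgraphof{msc(w)})$, as required. This simultaneously confirms that $\delta_g$ maps $Q$ into $Q$, since every reachable state has the form $\alpha(\pgraphof{msc(w)})$ and $msc(w)$ is always an exchange.

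It then remains to chain the three characterizations established earlier. A word $w$ is accepted iff $\alpha(\pgraphof{msc(w)})\in F$, i.e. iff it is a single-vertex graph; by Lemma~\ref{lem:sc-if-single-vertex} this holds iff $\cgraph{msc(w)}$ is strongly connected; and by Lemma~\ref{lem:prime-strongly-connected} this in turn holds iff $msc(w)$ is prime. Composing these equivalences gives $\alanguage(\mathcal A)=\{w\in\Sigma^*\mid msc(w)\text{ is prime}\}$. For the size bound, since every state of $Q$ is the abstraction of an exchange, Lemma~\ref{lem:finite-abstraction} gives $|Q|\leq 2^{6|\procSet|^2}$ (and one may further restrict to the reachable states if a strict inequality is desired).

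I expect no serious obstacle here, as the hard content is already packaged in the four preceding lemmas; the work is assembly and bookkeeping. The one point deserving care is the degenerate base case: $msc(\varepsilon)$ produces the empty P-graph, whose abstraction has no vertex and is therefore not single-vertex, so $\varepsilon$ is rejected. I would verify that this matches the intended convention that the empty exchange is not prime, and make explicit both that $q_0\in Q$ and that the invariance of $Q$ under $\delta_g$ is exactly what Lemma~\ref{lem:soundness-alpha-automaton} provides, so that the counting bound of Lemma~\ref{lem:finite-abstraction} indeed applies to all states of $\mathcal A$.
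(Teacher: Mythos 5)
Your construction is exactly the one in the paper: the same state set of abstracted P-graphs, the same transition function $\delta_g$, the same initial and final states, with Lemma~\ref{lem:soundness-alpha-automaton} driving the (implicit in the paper, explicit in your write-up) induction, Lemma~\ref{lem:finite-abstraction} giving the state bound, and Lemmas~\ref{lem:sc-if-single-vertex} and~\ref{lem:prime-strongly-connected} closing the equivalence with primality. The proposal is correct and follows essentially the same route as the paper's proof, with only slightly more explicit bookkeeping on the induction and the empty-word base case.
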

\begin{proof}
Let $\anauto=(Q,\Sigma,\delta_q,q_0,F)$ be defined
by
\begin{itemize}
\item $Q=\{\alpha(\pgraphof{\msc})\mid \msc \mbox{ is an exchange}\}$;
\item $\delta_q$ as defined in Section~\ref{sec:prime};
\item $q_0=\alpha(\pgraphof{\epsilon})$ where $\epsilon$
    denotes the empty MSC
\item $F=\{G\in Q \mid |G|=1\}$
\end{itemize}
Then by Lemma~\ref{lem:finite-abstraction}, $\anauto$
is a deterministic finite state automaton with at most
$2^{6|\procSet|^2}$ states.
Moreover, by Lemma~\ref{lem:soundness-alpha-automaton},
for all $w$, $\delta_q^*(q_0,w)=\pgraphof{msc(w)}$,
so $w$ is accepted iff $\pgraphof{msc(w)}$ is
a single vertex graph. By Lemma~\ref{lem:sc-if-single-vertex},
this is equivalent to the fact that $msc(w)$ is prime.
\qed
\end{proof}

\section{Computation of $k$}\label{sec:computation-k0}

So far we have shown:  in Lemma~\ref{lemma-lpfe}, we
established that a way to compute $\sd{\system}$ was
to compute the length $k$ of
the largest prime reachable exchange. To every word $w\in\Sigma^*$,
we associated an MSC $msc(w)$, and we showed that for every reachable
MSC $\msc$, there exists a word $w\in\Sigma^*$ such that $\msc=msc(w)$
(Lemma~\ref{lem:surjectivite-encodage}). We deduced that $k$
corresponds to the length of the longest word of
$\reachlanguage\cap\primelanguage$
, if $\reachlanguage\cap\primelanguage$ is finite,
otherwise $k=\infty$. In Section~\ref{sec:rfe}, we showed
that $\reachlanguage$ is an effective regular language, and,
in Section~\ref{sec:prime}, we showed that $\primelanguage$ is
also an effective regular language. We deduce that
$\reachlanguage\cap\primelanguage$ is therefore an effective regular language,
and that $k$ is computable (since the finiteness and
the length of the longest word of a regular language are computable).
With a  careful analysis of the automata that come into
play, we can give an upper bound on $k$.

\begin{restatable}{theorem}{thmcomputationk}\label{thm:computation-k0}
$\sd{\system}$ is computable, and if $\sd{\system}<\infty$ then
$\sd{\system}<|\system|^22^{8|\procSet|^2}$, where
$|\system|$ is the number of
global control states
and $|\procSet|$ the number of processes.
\end{restatable}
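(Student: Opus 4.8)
The plan is to package the two regularity results into one effective construction for the quantity $k$ of Lemma~\ref{lemma-lpfe}, and then to read the length bound directly off the sizes of the three automata involved. Recall first that the size of the exchange $msc(w)$ is exactly $|w|$, so $k$, the supremum of the sizes of all prime reachable exchanges, is the length of the longest word of $\reachlanguage\cap\primelanguage$ (and $k=\infty$ iff this language is infinite). By Theorems~\ref{th:reachable} and~\ref{thm:prime-reg}, both $\reachlanguage$ and $\primelanguage$ are effective regular languages, hence so is their intersection; one can decide whether it is finite and, when it is, compute the length $k$ of its longest word. If it is infinite, then $k=\infty$ and $\sd{\system}=\infty$ by Lemma~\ref{lemma-lpfe}(1). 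If it is finite, then $k<\infty$ and, by Theorem~\ref{thm:ksync-for-fixed-k}, one can decide whether $\system$ is \kSable{k}; by Lemma~\ref{lemma-lpfe}(2), $\sd{\system}=k$ in the positive case and $\sd{\system}=\infty$ in the negative case. This already establishes that $\sd{\system}$ is computable.

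For the explicit bound, observe by the contrapositive of Lemma~\ref{lemma-lpfe}(1) that $\sd{\system}<\infty$ forces $k<\infty$, i.e.\ $\reachlanguage\cap\primelanguage$ is finite, and by Lemma~\ref{lemma-lpfe}(2) that then $\sd{\system}=k$. It therefore suffices to bound the length of the longest word of $\reachlanguage\cap\primelanguage$. Rather than building a single automaton for $\reachlanguage$ (whose reachability bookkeeping would inflate the state count), I would exhibit $\reachlanguage\cap\primelanguage$ as a finite union of small product languages, one per parameter tuple. Concretely, for a reachable configuration $(\globalstateli,\Br_i)$ with $\Br_i\in\absBrSetGood$, a target $(\globalstatelf,\Br_f)$ with $\Br_f\in\absBrSetGood$, and a middle state $\globalstatelj$, consider the product of $\AutSRauto$, $\cdauto{\Br_i}{\Br_f}$, and the prime automaton $\mathcal A$ of Theorem~\ref{thm:prime-reg}. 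Using Lemmas~\ref{lemma:aut_SR_correct} and~\ref{lemma:language_B}, Proposition~\ref{prop:causaldeliv}, and Theorem~\ref{thm:prime-reg} exactly as in the proof of Theorem~\ref{th:reachable}, every prime reachable exchange $w$ lies in one such product (take for $(\globalstateli,\Br_i)$ the configuration reached after a witnessing prefix), and conversely each such product with $(\globalstateli,\Br_i)$ reachable is contained in $\reachlanguage\cap\primelanguage$.

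The state count of each product is the product of the three component counts: $\AutSRauto$ has $|L_\system|^2=|\system|^2$ states (pairs of global control states), $\cdauto{\Br_i}{\Br_f}$ has $|\absBrSet|=(2^{|\procSet|}\cdot 2^{|\procSet|})^{|\procSet|}=2^{2|\procSet|^2}$ states, and $\mathcal A$ has at most $2^{6|\procSet|^2}$ states by Lemma~\ref{lem:finite-abstraction}. Hence each product has at most $|\system|^2\,2^{2|\procSet|^2}\,2^{6|\procSet|^2}=|\system|^2\,2^{8|\procSet|^2}$ states. When $k<\infty$ each such product recognizes a finite language, being a subset of the finite language $\reachlanguage\cap\primelanguage$; by the standard pumping argument (an automaton with $N$ states whose language is finite accepts no word of length $\geq N$, since an accepting run of length $\geq N$ repeats a state and its cycle can be pumped) every accepted word has length strictly less than $|\system|^2\,2^{8|\procSet|^2}$. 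Taking the maximum over the finitely many parameter tuples, the longest word of $\reachlanguage\cap\primelanguage$ has length $<|\system|^2\,2^{8|\procSet|^2}$, so $\sd{\system}=k<|\system|^2\,2^{8|\procSet|^2}$.

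The closure and decidability facts are routine; the one point needing care is the state-count bookkeeping, and the main obstacle is to keep the product bound clean. The danger is that coupling the ``send'' and ``receive'' halves of $\AutSRauto$ through their shared middle state $\globalstatelj$, or summarizing reachability inside a monolithic automaton for $\reachlanguage$, would raise the $|\system|^2$ factor to $|\system|^3$ and spoil the bound. Fixing $\globalstateli,\globalstatelj,\globalstatelf,\Br_i,\Br_f$ once and arguing per-tuple is precisely what pins each product at $|\system|^2$ control-state pairs and makes the stated $|\system|^2\,2^{8|\procSet|^2}$ bound come out; note that determinism is not needed, since the pumping bound on the longest accepted word of a finite language holds already for nondeterministic automata.
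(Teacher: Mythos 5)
Your proposal is correct and follows essentially the same route as the paper: reduce to the longest word of $\reachlanguage\cap\primelanguage$, use Lemma~\ref{lemma-lpfe} together with Theorem~\ref{thm:ksync-for-fixed-k} for computability, and bound the word length by the state count of the product of $\AutSRauto$, $\cdauto{\Br_i}{\Br_f}$, and the prime automaton, for each fixed parameter tuple. Your added remarks on why the per-tuple decomposition keeps the factor at $|\system|^2$ and on the pumping argument for finite languages only make explicit what the paper's proof leaves implicit.
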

\begin{proof}
The fact that $\sd{\system}$ is computable
is explained at the beginning of
Section~\ref{sec:computation-k0}. We therefore
only prove the claim that, when $k<\infty$
it holds that $k<|\system|^22^{8|\procSet|^2}$.
$k$ is the length of the longest word
in $\reachlanguage\cap\primelanguage$.
By Theorem~\ref{th:reachable},
$$
\reachlanguage =  \bigcup\mathcal{R}
$$
and by Theorem~\ref{thm:prime-reg} there is an automaton
$\anauto$
such that
$\Language{\anauto}=\primelanguage$.
So we need to bound the length of the longest
word of
$$
    \mathcal{L}(\Aut{SR}(\globalstate{l}, \globalstatelj, \globalstate{l'}))
    \cap \Language{\anauto(B,B')}
    \cap \Language{\anauto}
$$
assuming that this language is finite, for any
$\globalstate{l},\globalstate{mid},\globalstate{l'},B,B'$.
This bound is given by the number of states of any
automaton that accepts this language (since any longer word
would require the automaton to feature a loop, and the language
would not be finite).
This language is recognized by an automaton that is the product
of the automata $\Aut{SR}(\globalstate{l},\globalstate{mid},\globalstate{fin})$,
$\anauto(B,B')$, and $\anauto$, so its number of states
is bounded by
$$
    |\Aut{SR}(\globalstate{l},\globalstate{mid},\globalstate{l'})|
    \times |\anauto(B,B'))|
    \times |\anauto|
$$
By definition of $\Aut{SR}$, $L_{\Aut{SR}}=L_{\system}^2$,
so $|\Aut{SR}|\leq |L_\system|^2$ (which we can also write
$|\system|^2$).
By definition of $\anauto(B,B')$,
$L_{(B,B')}=\absBrSet=(2^{\procSet}\times2^{\procSet})^{\procSet}$,
so $|\anauto(B,B')|\leq 2^{2|\procSet|^2}$.
Finally, by Theorem~\ref{thm:prime-reg},
$|\anauto|\leq 2^{6|\procSet|^2}$.
All toghether,
$$
\begin{array}{ll}
k & \leq |\system|^22^{2|\procSet|^2}2^{6|\procSet|^2}
\\ & \leq |\system|^22^{8|\procSet|^2}
\end{array}
$$
\qed
\end{proof}

As an immediate consequence of Theorems~\ref{thm:ksync-for-fixed-k}
and \ref{thm:computation-k0}, we get the following.

\begin{restatable}{theorem}{thmmain}\label{thm:decidability-exists-k}
The following problem is decidable : given a system $\system$,
does there exists a $k$ such that $\system$ is $k$-synchronizable.
\end{restatable}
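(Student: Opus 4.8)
The plan is to derive decidability directly by combining the two main results developed in the paper. The final theorem asks whether, for a given system $\system$, there exists a $k$ such that $\system$ is $k$-synchronizable. By the definition of synchronizability degree, such a $k$ exists if and only if $\sd{\system}<\infty$. Theorem~\ref{thm:computation-k0} already establishes that $\sd{\system}$ is \emph{computable}, so the heart of the argument is to observe that computability of $\sd{\system}$ (as an element of $\mathbb{N}\cup\{\infty\}$) immediately yields a decision procedure for the existence question.

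Concretely, the first step is to run the algorithm guaranteed by Theorem~\ref{thm:computation-k0}. That theorem tells us not only that $\sd{\system}$ is computable, but also gives an explicit finite bound: if $\sd{\system}<\infty$ then $\sd{\system}<|\system|^2 2^{8|\procSet|^2}$. This bound is what makes the procedure effective. The plan is therefore to set $K_0 = |\system|^2 2^{8|\procSet|^2}$ and test, using Theorem~\ref{thm:ksync-for-fixed-k}, whether $\system$ is $k$-synchronizable for some $k \le K_0$. Since there are finitely many such values of $k$ and each test is decidable by Theorem~\ref{thm:ksync-for-fixed-k}, this loop terminates.

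The correctness argument is then the following dichotomy. If $\system$ is $k$-synchronizable for some $k \le K_0$, we answer ``yes'' and we are done, since a witnessing $k$ has been found. Conversely, if $\system$ is not $k$-synchronizable for any $k \le K_0$, then by Lemma~\ref{lemma-lpfe} and the bound of Theorem~\ref{thm:computation-k0}, $\system$ cannot be $k$-synchronizable for any $k$ at all: were it $k$-synchronizable for some $k$, its synchronizability degree would be finite and hence strictly less than $K_0$, contradicting the failure of all tests up to $K_0$. In that case we answer ``no''. Thus the procedure decides the problem.

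The only subtlety — and the place where one must be careful rather than where any genuine difficulty lies — is to ensure the reduction faithfully captures the quantifier: ``there exists $k$'' is equivalent to $\sd{\system}<\infty$, which in turn is equivalent to the longest word of $\reachlanguage \cap \primelanguage$ being well-defined (the language being finite). Since both $\reachlanguage$ (Theorem~\ref{th:reachable}) and $\primelanguage$ (Theorem~\ref{thm:prime-reg}) are effective regular languages, finiteness of their intersection is decidable, and so the whole chain is effective. In short, the theorem is an immediate corollary of Theorems~\ref{thm:ksync-for-fixed-k} and~\ref{thm:computation-k0}, and essentially no new technical work is required beyond assembling these pieces.
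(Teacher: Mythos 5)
Your proposal is correct and follows essentially the same route as the paper, which derives this theorem as an immediate corollary of Theorems~\ref{thm:ksync-for-fixed-k} and~\ref{thm:computation-k0}; your bounded search up to $K_0$ and the paper's implicit procedure (compute $\sd{\system}$ via the largest prime reachable exchange and test finiteness) are just two equivalent ways of assembling the same two ingredients.
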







%
%
%

\section{Conclusion}\label{sec:conc}
We established that it is possible to determine whether there exists a bound
$k$ such that a given communicating system is \kSable{k}. For this, we showed
how the set of sequences of actions that compose an exchange of arbitrary size
can be represented as a regular language, which was possible thanks to the
mailbox semantics of communications. We leave for future work to decide whether it would be possible
to extend our result to peer-to-peer semantics.

\newpage
\bibliographystyle{splncs04}
 \bibliography{biblio}
 \newpage

%

\end{document}